\newtheorem{theorem}{Theorem} 
\newtheorem{assumption}{Assumption} 
\newtheorem{lemma}{Lemma}
\newtheorem{definition}{Definition}
\newcolumntype{Y}{>{\centering\arraybackslash}X} 
\newcommand{\Xbf}{\mathbf{X}}
\newcommand{\Vbf}{\mathbf{V}}
\newcommand{\Ybf}{\mathbf{Y}}
\newcommand{\Lbf}{\mathbf{L}}
\newcommand{\Hbf}{\mathbf{H}}
\newcommand{\Zbf}{\mathbf{Z}}
\newcommand{\Ibf}{\mathbf{1}}
\newcommand{\Wbf}{\mathbf{W}}
\newcommand{\Rbb}{\mathbb{R}}
\newcommand{\Pbf}{\mathbf{P}}
\newcommand{\Abf}{\mathbf{A}}
\newcommand{\Qbf}{\mathbf{Q}}
\newcommand{\Ubf}{\mathbf{U}}
\newcommand{\Omc}{\mathcal{O}}
\newcommand{\Mmc}{\mathcal{M}}
\newcommand{\XZbar}{\widebar{\Xbf\Zbf}}
\newcommand{\Hbar}{\widebar{H}}
\newcommand{\Yoe}{\Ybf^{0,\varepsilon}}
\newcommand{\Yhoe}{\hat{\Ybf}^{0,\varepsilon}}
\newcommand{\card}[1]{\left\lvert #1 \right\rvert}
\newcommand{\norm}[1]{\left\lVert #1 \right\rVert}
\newcommand{\indep}{\perp \!\!\! \perp}
\newcommand{\floor}[1]{\left\lfloor #1 \right\rfloor}
\DeclareMathOperator*{\argmin}{arg\,min}
\newcommand{\mynote}[3]{\quad
    \fbox{\bfseries\sffamily\scriptsize#1}{\small$\blacktriangleright$\textsf{\textit{\color{#3}{#2}}}$\blacktriangleleft$\quad}}}
\newcommand{\mynote}[3]{}}
\begin{document}
\setlength{\abovedisplayskip}{3pt}
\setlength{\belowdisplayskip}{3pt}

\begin{center}
\textbf{\Large Data-driven model selection within the matrix completion method for causal panel data models.} \\

\vspace{1.5cc}
{ Sandro Heiniger$^{1}$}\\

\vspace{0.3 cm}

{\small $^{1}$University of St.Gallen, sandro.heiniger@unisg.ch}
 \end{center}
\vspace{1.5cc}

\begin{abstract}
    \noindent Matrix completion estimators are employed in causal panel data models to regulate the rank of the underlying factor model using nuclear norm minimization. This convex optimization problem enables concurrent regularization of a potentially high-dimensional set of covariates to shrink the model size. 
    For valid finite sample inference, we adopt a permutation-based approach and prove its validity for any treatment assignment mechanism. Simulations illustrate the consistency of the proposed estimator in parameter estimation and variable selection. An application to public health policies in Germany demonstrates the data-driven model selection feature on empirical data and finds no effect of travel restrictions on the containment of severe Covid-19 infections.

\vspace{0.95cc}
\parbox{24cc}{{\it Matrix completion; Model selection; Panel data; Regularisation;}
}
\end{abstract}


\section{Introduction} \label{Introduction}
The prevalence of big data and high-dimensional covariate spaces has grown significantly in applied economics (see~\cite{jin2015significance} for an overview). Consequently, the econometric literature strives to develop suitable tools to cater to the needs of applied research (e.g., \cite{belloni2014inference, farrell2015robust, belloni2017program, ning2020robust}). One widely employed framework in this context is the $l_1$ regularization, known as \textit{lasso}, which facilitates model selection and feature extraction by regulating covariate parameters. \cite{fan2006statistical} provide an extensive discussion on the pivotal role of model selection for knowledge discovery and addressing high-dimensionality challenges.\par

Rigorous sparse model selection methods offer key advantages in terms of prediction accuracy and interpretability \citep{tibshirani1996regression}. Beyond prediction tasks, lasso techniques have found substantial utility in causal analysis. Even though for causal identification strategies that rely on the unconfoundedness assumption~\citep{rosenbaum1983central}, the role of covariates narrows down to including all confounding variables in the data, model selection nevertheless offers significant features:\par

1) Common estimators often rely on estimating an outcome/propensity score model to evaluate unobserved potential outcomes or as a nuisance function. Precise estimation of those underlying models can lead to significant gains in prediction accuracy for causal estimands in finite samples.\par

2) In the absence of theoretical guidance on the true model, researchers may face a high-dimensional problem, where the number of parameters exceeds available observations (often referred to as '$p \gg n$' or 'The Curse of Dimensionality'). In such scenarios, estimators like the linear regression family are inapplicable due to insufficient degrees of freedom, necessitating dimensionality reduction.\par

3) The econometric literature has proposed various variants of post-regularization estimation methods~\citep{chernozhukov2015valid}. If the initial model selection step achieves sufficient sparsity, a second non-regularized estimation of the resulting parsimonious model may even become the oracle estimator when the true model is identified~\citep{belloni2013least}. \par

4) Recent years have witnessed a significant surge in interest in estimating and analyzing heterogeneous treatment effects \citep{kunzel2019metalearners}. 
A model selection feature can offer valuable support and guidance for selecting potential heterogeneity dimensions.\par

All the aforementioned points are strongly supported by the prevalence of low-rank matrices in high-dimensional data, as demonstrated in~\cite{udell2019big}. It has been shown that \textit{lasso} can consistently identify such sparse models while achieving convergence at the optimal rate concerning the parameters of interest (known as the \textit{oracle property}) under specific conditions~\citep{meinshausen2006variable,zou2006adaptive,meinshausen2007relaxed}.\par


Our main contribution lies in extending the use of $l_1$ regularization to matrix completion estimators for causal panel data models. In their seminal work, \cite{athey2021matrix} lay the foundation for employing matrix completion methods in causal panel data models. This estimator utilizes observed elements of control outcomes to impute the unobserved (\textit{missing}) potential outcomes of treated observations in the untreated state. By interpreting the panel data structure as a matrix, this procedure approximates the complete matrix of control state potential outcomes by regularizing the complexity of the underlying factor model using the nuclear norm. Matrix completion estimation has been shown to outperform synthetic control estimators or regression methods based on the unconfoundedness assumption~\citep{athey2019ensemble, athey2021matrix}, and its initial applications have shown its practical value~\citep{wood2020procedural, rafaty2020carbon, levy2022effects}. Appendix~\ref{app:Matrix completion estimator} provides background of the matrix completion estimator and an embedding in the context of causal panel data models.\par

In this work, we leverage the convex nature of nuclear norm minimization to regularize the rank of the general factor matrix. In a model with covariates, this allows for concurrent regularization of a potentially high-dimensional set of variables. Unlike other estimation methods, where $l_1$ regularization often increases the numerical complexity of an estimator by lacking a closed-form solution, preserves $l_1$ regularization of covariate parameters the convex optimization of rank regularization in matrix completion estimators, thus adding no additional complexity. The addition of $l_1$ regularization to the covariate space enhances the matrix completion framework with a model selection property, thus enhancing its performance in the context of causal panel data analysis without increasing the computational complexity. The methodology allows for a two-step approach similar to common post-regularisation estimators: Involving the selection of the optimal model as the first step and conducting a second estimation without covariate regularization ensuring unbiased estimates of model parameters. \par

We adopt the permutation-based inference procedure presented in \cite{chernozhukov2021exact} to test the sharp null hypothesis of a zero treatment effect, accommodating a stationary and weakly dependent shock process. The validity of this inference procedure is established not only for the proposed estimator with integrated $l_1$ covariate regularization but also, as a second contribution, for settings with any treatment assignment scheme. This contribution extends the set of potential applications of matrix completion methods with integrated model selection properties by enabling their use in diverse settings with various treatment assignment mechanisms. This advancement enhances the robustness and versatility of our proposed approach, making it well-suited for a wide range of real-world scenarios.\par

A numerical simulation using the proposed estimator demonstrates that the introduced $l_1$ regularization on the covariate space within the matrix completion estimator effectively reduces the determined model size, as intended. Employing a cross-validation step to determine the optimal penalty parameter value justifies adopting a \textit{1se} optimality criterion to obtain the ideal model size. The resulting model sizes using the \textit{1se} condition are shown to be precise and appear to converge to the true values of the underlying model for large sample sizes.\par.

The proposed estimator outperforms the matrix completion estimator without covariate regularization, in particular for small sample sizes or strong signal-to-noise ratios, and it remains equally precise in other settings. The permutation-based approach necessitates enforcing the null hypothesis for exact and valid inference. However, this requirement may lead to a biased treatment effect estimation if the null hypothesis does not hold. In Section \ref{sec:Data-driven model selection}, we shed light on how this impacts the estimation procedure and identify the conditions under which a simple rule-of-thumb correction of the treatment effect estimate can be applied. The simulation results reveal that this correction partially mitigates the downward bias and achieves equal estimation accuracy as the unbiased estimates when discarding the null hypothesis required for inference. Executing a two-stage approach with a second non-regularized estimation on the initially determined model does not yield an improved estimation of the treatment effect on the treated.\par

An illustrative application examining the impact of travel restriction policies implemented during the Covid-19 pandemic on the incidence of infections requiring treatment in intensive care units underscores the advantageous properties of the model selection mechanism. Within a high-dimensional context characterized by a presumed minuscule signal-to-noise ratio, the proposed estimator adeptly identifies a notably sparse model. The empirical application using panel data aligns precisely with the anticipated behavior of the estimator, as derived from the simulation study. The findings reveal that the estimated effect of the mandatory testing requirement upon entry from foreign countries with heightened incidence rates is small coming along with a notably large p-value. This compelling evidence suggests that the influence of this particular travel restriction policy on public health outcomes is negligible.\par

To facilitate the adoption and replication of our methodology, we have developed an R-package that implements our proposed estimator. The package is publicly available for download on our GitHub repository.\footnote{\url{github.com/heinigersandro/MC_MS}} We encourage researchers to utilize our methodology, validate our findings, and contribute to advancements in the field.\par

The remainder of the article reads as follows: In Section \ref{sec:Data-driven model selection}, we explain how the model selection property is incorporated into the estimator. Section \ref{sec:Theoretical properties} discusses the theoretical properties of the proposed estimator. The behavior of the proposed estimator in simulations is presented in Section \ref{sec:Simulations}. Section \ref{sec:Illustration} discusses the results of the illustrative application to Covid-19 infections in Germany. Finally, we conclude in Section \ref{sec:Conclusion}. In the Appendix, we provide an introduction to the matrix completion estimator for causal panel data models, additional estimation results and proofs for the theoretical findings.

\section{Data-driven model selection within matrix completion estimators}\label{sec:Data-driven model selection}

We consider a panel of $N$ units observed over $T$ periods, where each unit $i$ potentially undergoes a binary treatment in time-period $t$, $W_{it} \in \{0,1\}$, with corresponding potential outcomes $Y_{it}(0) \coloneqq Y_{it}(W_{it}=0)$ and $Y_{it}(1) \coloneqq Y_{it}(W_{it}=1)$. Our object of interest is the average treatment effect on the treated\footnote{As reasoned in Appendix~\ref{app:Matrix completion estimator}, we assume most observations to be in the untreated state and therefore restrict the focus on the ATET.}, defined as follows:
\begin{equation} \label{eq:tau_exp}
\tau_{\text{ATET}}=\mathbb{E}_{(i,t):W_{it}=1}\left[Y_{it}(1)-Y_{it}(0)\right].
\end{equation}

We leverage the model with covariates proposed in \cite{athey2021matrix}:
\begin{equation}\label{eq:Full model}
\Ybf=\bm{\Theta}\circ \Wbf + \Lbf^*+\Xbf\Hbf^*\Zbf + \left[\Vbf_{it}^\top\bm{\beta}^*\right]_{it} + \mathbf{\Gamma}^*\Ibf_T^\top + \Ibf_N(\mathbf{\Delta}^*)^\top + \Ubf,
\end{equation}
where $\Ybf \!\in\!\Rbb^{N\!\times\!T}$ represents the realized outcomes, $\bm{\Theta} \!\in\!\Rbb^{N\!\times\!T}$ is the matrix of potentially heterogeneous treatment effects, $\Wbf \!\in\!\Rbb^{N\!\times\!T}$ denotes the treatment allocation, $\Lbf^* \!\in\!\Rbb^{N\!\times\!T}$ represents the unobserved factor matrix of low-rank, $\Xbf \!\in\!\Rbb^{N\!\times\!P}$ contains unit-specific covariates, $\Zbf\!\in\!\Rbb^{Q\!\times\!T}$ contains time-specific covariates, $\Vbf_{it}\!\in\!\Rbb^{J}$ contains the unit-time varying covariates for unit $i$ at time $t$, $\Hbf^* \!\in\!\Rbb^{P\!\times\!Q}$ and $\bm{\beta}^* \!\in\!\Rbb^{J}$ are the unknown covariate parameters, $\bm{\Gamma}\!\in\!\Rbb^{N}$ represents unit-level fixed effects, $\bm{\Delta}\!\in\!\Rbb^{T}$ represents time-level fixed effects, and $\Ubf \!\in\!\Rbb^{N\!\times\!T}$ is a random shock term.\par

\begin{assumption}[Identifying conditions]\label{ass:identifying condictions}
Assume that $E[\Ubf \:\vert\: \Lbf, \Hbf, \bm{\beta}, \bm{\Gamma}, \bm{\Delta}]=0$, and that $\forall_{i\neq j}: \Ubf_{i,\cdot} \indep \Ubf_{j,\cdot} \; \vert\; \Lbf, \Hbf, \bm{\beta}, \bm{\Gamma}, \bm{\Delta}$.
\end{assumption}

\begin{assumption}[Regularity of stochastic shock]\label{ass:regularity of stochastic shock}
Suppose that the observed outcomes follow the model \eqref{eq:Full model}. Assume that for each observation $i$, the stochastic process $\{\Ubf_{it}\}_{t=1}^T$ satisfies one of the following conditions:
\begin{enumerate}[itemsep=-6pt]
    \item $\{\Ubf_{it}\}_{t=1}^T$ are iid, or
    \item $\{\Ubf_{it}\}_{t=1}^T$ are stationary, strongly mixing, with sum of mixing coefficients bounded by M.
\end{enumerate}
\end{assumption}

Following the literature on factor models, we use an additive and linear model in \eqref{eq:Full model}. Despite appearing restrictive, the set of covariates is not limited in size, allowing for the inclusion of any functional forms if desired. More complex interactive terms are absorbed by the latent matrix $\Lbf^*$ and potentially by the fixed effects as well, granting the model \eqref{eq:Full model} sufficient flexibility to cover a broad range of applications.\par

The structure of $\Hbf$ in the outcome model \eqref{eq:Full model} allows only for linked unit- and time-covariate effects. To include linear terms in both $\Xbf$ and $\Zbf$, a more comprehensive model can be defined as follows:
\begin{equation}\label{eq:Full model linear}
\Ybf=\bm{\Theta}\circ\Wbf + \Lbf^*+\tilde{\Xbf}\tilde{\Hbf}^*\tilde{\Zbf} + \left[\Vbf_{it}^\top\bm{\beta}^*\right]_{it} + \mathbf{\Gamma}^*\Ibf_T^\top + \Ibf_N(\mathbf{\Delta}^*)^\top + \Ubf,
\end{equation}
where $\tilde{\Xbf}=[\Xbf|\mathbf{I}_{N\!\times\!N}]$ and $\tilde{\Zbf}=[\Zbf^\top|\mathbf{I}_{T\!\times\!T}]^\top$. For brevity, we focus on the covariate setting as in \eqref{eq:Full model} since all adaptations to the richer model \eqref{eq:Full model linear} are straightforward.\par

We define the set of treated observations as $\mathcal{M}\coloneqq \left\{(i,t) \text{ with } W_{it}=1\right\}$ and the set of control observations as $\mathcal{O}\coloneqq \left\{(i,t) \text{ with } W_{it}=0\right\}$. Correspondingly, we define the matrices projecting on the respective treatment allocation space as follows:
 \begin{align*}
    \Pbf_{\Omc}(A)_{it} &= \begin{cases} \Abf_{it} & \text{if } (i,t) \in \Omc \\[-10pt]
    0 & \text{if } (i,t) \notin \Omc 
    \end{cases}\\
    \Pbf_{\Mmc}(A)_{it} &= \begin{cases} \Abf_{it} & \text{if } (i,t) \in \Mmc \\[-10pt]
    0 & \text{if } (i,t) \notin \Mmc 
    \end{cases}.
\end{align*}
Using the defined notation and supposing Assumption \ref{ass:identifying condictions} holds, we can rewrite the sample representation of the ATET, which is our object of interest, as follows:
\begin{equation} \label{eq:tau_sample} \hat{\tau}_{\text{ATET}}=\frac{1}{\card{\Mmc}}\sum_{(i,t) \in \Mmc} \Pbf_{\Mmc}(\Ybf-\hat{\Ybf}(0)).\end{equation}

To evaluate the estimator~\eqref{eq:tau_sample}, we only need the matrix of potential outcomes in the absence of treatment, denoted by $\Ybf(0)$. As for each observation, only one potential outcome is realized, some elements in $\Ybf(0)$ are missing as they are not observed in practice. The outcome model~\eqref{eq:Full model} directly defines the potential outcome model as follows:
\begin{equation}\label{eq:PO model}
\Ybf(0)=\Lbf^*+\Xbf\Hbf^*\Zbf + \left[\Vbf_{it}^\top\bm{\beta}^*\right]_{it} + \mathbf{\Gamma}^*\Ibf_T^\top + \Ibf_N(\mathbf{\Delta}^*)^\top + \Ubf.
\end{equation}

We estimate the unknown parameters of the potential outcome model \eqref{eq:PO model} using a matrix completion method similar to~\cite{athey2021matrix} and~\cite{chernozhukov2021exact}, but with regularization on the full covariate space\footnote{In \cite{athey2021matrix}, the model with covariates drafts a penalty term on the link matrix $\Hbf$ between unit- and time-varying covariates, but not on the unit-time-varying covariate parameters $\bm{\beta}$. Though, the authors do not discuss the implications of covariate space regularization on the model selection properties of the proposed estimator.}\footnote{In principle, the unit- and time-level fixed effects could be regularized as well. We follow the arguments of~\cite{hastie2009elements} that regularizing the intercept in $l_1$ regularization increases the bias (see also the discussion in~\cite{athey2021matrix}).}. We adopt the permutation-based procedure for valid finite sample inference described in \cite{chernozhukov2021exact}. The method requires the enforcement of the null hypothesis of a zero treatment effect to obtain a valid and exact inference procedure.\footnote{The intuition for this is that the permutation-based inference requires the estimation to be provably accurate across the whole unit-time space to obtain equally distributed residuals. Enforcing the null allows us to use all information in the data, regardless of the treatment state.} In fact, \cite{chernozhukov2021exact} show that the rejection rates are substantially erroneous if the null is not enforced. The presumed absence of a treatment effect implies that there is no missing data, which is conterminous to estimating the model parameters using all observations in the panel. This fundamentally differentiates the \cite{chernozhukov2021exact} estimation approach from \cite{athey2021matrix} which guarantees finite sample bounds for the effect estimates but no valid inference.\par

The estimation procedure is as follows:
\begin{equation}\label{eq:Estimator}\begin{split}
(\hat{\Lbf},\hat{\Hbf}, \hat{\bm{\beta}}, \hat{\bm{\Gamma}},\hat{\bm{\Delta}})=\argmin_{\Lbf,\Hbf, \bm{\beta},\bm{\Gamma},\bm{\Delta}}\left\{ \frac{1}{NT} \norm{\Ybf - \Lbf- \Xbf\Hbf\Zbf - \left[\Vbf_{it}^\top\bm{\beta}\right]_{it} - \mathbf{\Gamma}\Ibf_T^\top - \Ibf_N(\mathbf{\Delta})^\top}_F^2 \right. \\ \left. \vphantom{\frac{1}{\card{\Omc}}} + \lambda_L\norm{\Lbf}_* + \lambda_H\norm{\Hbf}_{1,e}+\lambda_{\beta} \norm{\bm{\beta}}_{1,e} \right\},
\end{split}\end{equation}
where $\norm{\Abf}_F^2=\sum_{(i,t)} \Abf_{it}^2$ is the squared Fröbenius norm, $\norm{\Abf}_*=\sum_{i=1}^N \sigma_i(\Abf)$ is the nuclear norm, and $\norm{\Abf}_{1,e}=\sum_{(i,t)} \card{\Abf_{it}}$ is the element-wise $l_1$ norm. \par

The choice of the nuclear norm on $\Lbf$ is crucial as it regularizes the rank of the matrix through its singular values $\sigma_i(\Lbf)$. Using other norms, such as the Frobenius or element-wise $l_1$ norm, would not be suitable, as the objective function would be minimized by setting $\Lbf_{it}=0$ for all $(i,t) \in \Mmc$ due to the projection on the control space $\Pbf_\Omc$. The rank norm $\norm{\Abf}_0=\sum_{i=1}^N \Ibf\{\sigma_i(\Lbf)>0\}$,, which might be a preferred choice, is computationally infeasible as the rank norm optimization is NP-hard~\citep{recht2010guaranteed}. On the other hand, the choice of the other norms is straightforward. The Frobenius norm on the prediction error represents an MSE-like measure, while the $l_1$ regularization on the covariate parameters facilitates the model selection process.\par

The proposed estimator \eqref{eq:Estimator} can be efficiently computed by iteratively applying a \textit{soft-impute} step~\citep{mazumder2010spectral} to handle the nuclear norm rank regularization of the matrix of unobserved factors, and a gradient descent step~\citep{friedman2010regularization} with respect to the $l_1$ regularization of the covariate space. It is important to note that both steps remain within the realm of convex optimization, making the computation computationally efficient.\par 

The values of the regularization parameters $\lambda_{L,H,\beta}$ are selected through cross-validation. To ensure that the estimation of the ATET produces precise potential outcomes under no treatment, we aim to find the $\lambda_{L,H,\beta}$ values that minimize the prediction error for potential outcomes. To maintain independence from whether the null hypothesis holds or not, we restrict the cross-validation sample to the observations in the control state. For a specified number of folds $K$, we randomly draw $K$ sets $\Omc_k\subset \Omc$ with a size $\frac{\card{\Omc_k}}{\card{\Omc}}=\frac{\card{\Omc}}{NT}$, corresponding to the overall share of untreated observations. The evaluation set for each fold is given by $\Omc_{k}^{\boldsymbol{-}} \coloneqq (i,t):(i,t)\!\in \!\Omc , (i,t)\! \notin\! \Omc_k$. For a chosen set $\Lambda$, we search for the optimal $\lambda$-configuration by minimizing the mean squared prediction error over all partitions as follows:
\begin{align*}\begin{split}(\lambda_L,\lambda_H,\lambda_{\beta})=\argmin_{\left(\lambda_{L'},\lambda_{H'}, \lambda_{\beta'}\right)\in \Lambda}\sum_{k=1}^K \left\lvert\left\lvert\Pbf_{\Omc_k^{\boldsymbol{-}}} \left(\Ybf - \hat{\Lbf'}_k- \Xbf\hat{\Hbf'}_k\Zbf - \left[\Vbf_{it}^\top\hat{\bm{\beta'}}_k\right]_{it} \right.\right.\right. \\ \left.\left.\left.- \hat{\mathbf{\Gamma'}}_k\Ibf_T^\top - \Ibf_N(\hat{\mathbf{\Delta'}}_k)^\top\right)\vphantom{\Pbf_{\Omc_k^{\boldsymbol{-}}}}\right\rvert\right\rvert_F^2.\end{split}
\end{align*}
Here, $(\hat{\Lbf'}_k,\hat{\Hbf'}_k, \hat{\bm{\beta'}}_k, \hat{\bm{\Gamma'}}_k,\hat{\bm{\Delta'}}_k)$ are estimated by \eqref{eq:Estimator} on each training set $\Omc_k$ using the penalty parameters $\left(\lambda_{L'},\lambda_{H'}, \lambda_{\beta'}\right)$.
\par

To restrict the set of to-be-evaluate elements $\Lambda$, we identify the theoretical minimal values of $\lambda_{L,H,\beta}$ such that all parameter estimates in the respective objects are reduced to 0. Along with $\lambda_{L,H,\beta}=0$, this establishes two natural bounds for each penalization parameter. The optimization search can be conducted on a standard three-dimensional grid. The MCMS R-package utilizes a more advanced hyper-cube search that leverages the convex optimization to expedite convergence to the minimizing configuration. This approach enhances the efficiency of the parameter selection process.\par

A natural extension of estimators with integrated model selection is to implement a two-stage procedure. In the first stage, the estimation procedure is run as usual to identify the sparse model, and in the second stage, the estimation is performed without regularization on the identified subset of covariates~\citep{chernozhukov2015valid}. The objective of the second stage is to obtain an unbiased estimate of model parameters by dropping the regularization. For the matrix completion method, the analogous two-stage estimator first identifies the informative covariates and determines the rank of $\Lbf$ in the first step. Then, in the second step, it applies an unregularized estimation under the restrictions on the covariate space and the rank of the factor model. \par


The estimation of the model \eqref{eq:Estimator} is obviously biased if the null hypothesis is violated. In such cases, the treatment effect might be either evenly dispersed over all observations and periods or allocated to confounding variables, leading to a reduction in the ATET estimate. If the assessment of p-values is not a need, an estimation of the model parameters using only the observations in the absence of treatment yields more precise estimates of the treatment effect:
\begin{equation}\label{eq:Estimator_Athey}\begin{split}
(\hat{\Lbf},\hat{\Hbf}, \hat{\bm{\beta}}, \hat{\bm{\Gamma}},\hat{\bm{\Delta}})=\argmin_{\Lbf,\Hbf, \bm{\beta},\bm{\Gamma},\bm{\Delta}}\left\{ \frac{1}{\card{\Omc}} \norm{\Pbf_\Omc \left(\Ybf - \Lbf- \Xbf\Hbf\Zbf - \left[\Vbf_{it}^\top\bm{\beta}\right]_{it} - \mathbf{\Gamma}\Ibf_T^\top - \Ibf_N(\mathbf{\Delta})^\top\right)}_F^2 \right. \\ \left. \vphantom{\frac{1}{\card{\Omc}}} + \lambda_L\norm{\Lbf}_* + \lambda_H\norm{\Hbf}_{1,e}+\lambda_{\beta} \norm{\bm{\beta}}_{1,e} \right\},
\end{split}\end{equation}\par

It is important to note that if the treatment effect is independent of the covariates or the unconfoundedness assumption holds, the treatment effect is distributed evenly across all observations. Under the assumption of an independent treatment effect, each observation receives a share of the average effect among the treated sample, given by $\frac{\card{\Mmc}}{\card{\Omc}}$. If we assume homogeneity, this share is equal to the ATET. However, in situations with strongly heterogeneous treatment effects, the average effect in the sample may diverge from the ATET. Additionally, the treatment allocation mechanism also affects the dispersion of treatment effects among the observations under the unconfoundedness assumption. The split of the sample average effect assigns more weight to observations with a higher propensity score, as the treatment effect is partly allocated to confounding variables. To correct the ATET for marginal treatment effect and propensity score heterogeneity, a simple rule-of-thumb correction is given by: \begin{equation}
    \hat{\tau}_\text{ATET,rot\_cor}=\frac{NT}{\card{\Omc}}\;\hat{\tau}_\text{ATET} \label{eq:tau_correction}
\end{equation}    

\section{Theoretical properties}\label{sec:Theoretical properties}

\begin{definition}[Definition of test statistic $S$] \label{def:test statistic} The test statistic $S(\hat{U})$ is defined as follows: $$ S(\hat{U}) = \card{\Mmc}^{-1}\;\sum\nolimits_{(i,t)\in\Mmc} \card{\hat{U}_{it}} $$
\end{definition}
While other test statistics based on element-wise p-norms are theoretically possible, the test statistic in Definition \ref{def:test statistic} has been found to exhibit good properties when estimating average treatment effects \citep{chernozhukov2021exact}. 
\begin{assumption}[Bounded test statistic]\label{ass:test_statistic}
Assume that, under the null hypothesis, the density function of the test statistic $S(U)$ exists and is bounded.   
\end{assumption}\par

Let the actual and estimated shock-discarded potential outcomes in the absence of treatment be denoted as:
\begin{align*}
    \Yoe & \coloneqq \Lbf^*+\Xbf\Hbf^*\Zbf + \left[\Vbf_{it}^\top\bm{\beta}^*\right]_{it} + \mathbf{\Gamma}^*\Ibf_T^\top + \Ibf_N(\mathbf{\Delta}^*)^\top=\Ybf(0)-\Ubf \\
    \Yhoe & \coloneqq \hat{\Lbf}+\Xbf\hat{\Hbf}\Zbf + \left[\Vbf_{it}^\top\hat{\bm{\beta}}\right]_{it} + \hat{\mathbf{\Gamma}}\Ibf_T^\top + \Ibf_N(\hat{\mathbf{\Delta}})^\top=\hat{\Ybf}(0)
\end{align*}
    
\begin{assumption}[Consistency of the counterfactual estimation]\label{ass:consistency}
    Suppose that $\Yhoe$ is a mean-unbiased predictor of $\Yoe$. Let there be two sequences $\gamma_{N,T},\delta_{N,T}$ converging to 0 in $NT$. Assume with probability $1-\gamma_{N,T}$ that 
 \begin{enumerate}[itemsep=-4pt]
    \item $(NT)^{-1}\norm{\Yhoe-\Yoe}_2^2\leq \delta_{N,T}^2$ (small estimation mse), and
    \item $\forall_{(i,t)\in\Omc}: \card{\Yhoe-\Yoe}\leq \delta_{N,T}$ (small pointwise estimation error).
\end{enumerate}   
\end{assumption}
Assumption \ref{ass:consistency} is easy to verify for various types of estimators, including the basic non-regularized version of the matrix completion estimator, as shown in \cite{chernozhukov2021exact}. The following lemma elaborates on the primitive conditions to ensure that the proposed estimator with covariate regularization satisfies Assumption \ref{ass:consistency}.
\begin{lemma}\label{lem:consistency}
Consider the Estimator \eqref{eq:Estimator} and assume the conditions stated at the beginning of the proof, then $(NT)^{-1}\norm{\Yhoe-\Yoe}_2^2 = o_P(1)$ and $\forall_{(i,t)\in \Mmc}\: : \: \card{\Yhoe_{it}-\Yoe_{it}}= o_P(1)$.
\end{lemma}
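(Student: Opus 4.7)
The strategy is to derive a standard basic inequality from optimality of the estimator and then turn it into a non-asymptotic error bound, from which both conclusions follow once the penalty parameters are chosen to dominate the stochastic noise. Let $\theta = (L, H, \beta, \Gamma, \Delta)$ denote the full parameter and write $Y^{0,\varepsilon}(\theta)$ for the fitted signal. Since the truth $\theta^*$ is feasible, comparing the objective at $\hat{\theta}$ and $\theta^*$ yields, after expanding the Frobenius square and substituting $\mathbf{Y} = \mathbf{Y}^{0,\varepsilon} + \mathbf{U}$, an inequality of the form
\begin{equation*}
\tfrac{1}{NT}\|\hat{\mathbf{Y}}^{0,\varepsilon} - \mathbf{Y}^{0,\varepsilon}\|_F^2 \;\le\; \tfrac{2}{NT}\langle \mathbf{U}, \hat{\mathbf{Y}}^{0,\varepsilon} - \mathbf{Y}^{0,\varepsilon}\rangle \;+\; \lambda_L(\|\mathbf{L}^*\|_* - \|\hat{\mathbf{L}}\|_*) + \lambda_H(\|\mathbf{H}^*\|_{1,e} - \|\hat{\mathbf{H}}\|_{1,e}) + \lambda_\beta(\|\bm{\beta}^*\|_{1,e} - \|\hat{\bm{\beta}}\|_{1,e}).
\end{equation*}

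Next I would control the noise cross-term by dual-norm duality separately in each additive component: $|\langle \mathbf{U}, \hat{\mathbf{L}} - \mathbf{L}^*\rangle| \le \|\mathbf{U}\|_{\mathrm{op}}\,\|\hat{\mathbf{L}} - \mathbf{L}^*\|_*$, $|\langle \mathbf{U}, \mathbf{X}(\hat{\mathbf{H}} - \mathbf{H}^*)\mathbf{Z}\rangle| \le \|\mathbf{X}^\top\mathbf{U}\mathbf{Z}^\top\|_{\infty,e}\,\|\hat{\mathbf{H}} - \mathbf{H}^*\|_{1,e}$, and analogously for $\bm{\beta}$ with the $\mathbf{V}_{it}$ contractions. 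Under Assumption~\ref{ass:regularity of stochastic shock} (iid or stationary strongly-mixing rows) plus the standard sub-Gaussian/moment conditions and bounded-covariate conditions that I would add to the ``conditions stated at the beginning of the proof,'' each of these dual norms concentrates: $\|\mathbf{U}\|_{\mathrm{op}} = O_P(\sqrt{N}+\sqrt{T})$ by matrix-Bernstein/Latala-type bounds (the mixing case follows via a blocking argument as in~\cite{chernozhukov2021exact}), and the entry-wise suprema are $O_P(\sqrt{\log(PQ)\cdot NT})$ and $O_P(\sqrt{J\log J \cdot NT})$. Choosing
$$\lambda_L \asymp \tfrac{\sqrt{N}+\sqrt{T}}{NT},\qquad \lambda_H \asymp \sqrt{\tfrac{\log(PQ)}{NT}},\qquad \lambda_\beta \asymp \sqrt{\tfrac{\log J}{NT}},$$
each stochastic cross-term is absorbed by its regularizer, and the basic inequality reduces to $\tfrac{1}{NT}\|\hat{\mathbf{Y}}^{0,\varepsilon} - \mathbf{Y}^{0,\varepsilon}\|_F^2 \lesssim \lambda_L \|\mathbf{L}^*\|_* + \lambda_H\|\mathbf{H}^*\|_{1,e} + \lambda_\beta\|\bm{\beta}^*\|_{1,e}$. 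Under the usual low-rank plus sparsity scaling ($\mathrm{rank}(\mathbf{L}^*)$ fixed, $\|\mathbf{H}^*\|_{1,e}, \|\bm{\beta}^*\|_{1,e}$ bounded, and $\log(PQJ)/\min(N,T) \to 0$), this right-hand side is $o(1)$, yielding the first claim $\delta_{N,T}^2 = o_P(1)$.

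For the pointwise conclusion on $\mathcal{M}$ I would then lift the MSE bound to a sup bound using the structural constraints. The fitted components $\mathbf{X}\hat{\mathbf{H}}\mathbf{Z}$, $[\mathbf{V}_{it}^\top\hat{\bm{\beta}}]_{it}$, $\hat{\bm{\Gamma}}\mathbf{1}_T^\top$ and $\mathbf{1}_N\hat{\bm{\Delta}}^\top$ admit entrywise bounds proportional to the corresponding parameter norms times bounded-covariate constants, so the entrywise error for each of these is controlled by the $\ell_1$-error of the estimated parameters, which in turn is controlled via the penalized objective. The only subtle piece is $|\hat{\mathbf{L}}_{it} - \mathbf{L}^*_{it}|$: here I would invoke an incoherence/spikiness condition on $\mathbf{L}^*$ and a truncation (\textit{clipping}) of $\hat{\mathbf{L}}$ to the natural bounded range of the outcome, which together with the Frobenius bound and the rank of $\hat{\mathbf{L}}$ give $\max_{(i,t)}|\hat{\mathbf{L}}_{it} - \mathbf{L}^*_{it}| = o_P(1)$ by the standard argument (see Lemma~6 in~\cite{chernozhukov2021exact} and the spikiness reasoning in~\cite{athey2021matrix}). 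Summing the four component bounds delivers the pointwise claim uniformly over $\mathcal{M}$.

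The main obstacle is the pointwise statement on the \emph{treated} block $\mathcal{M}$: the Frobenius bound is an average over all cells, so upgrading to uniform control on an adversarially selected subset is only possible under incoherence/spikiness of $\mathbf{L}^*$ and bounded covariates; verifying that these conditions propagate through the three simultaneous regularizers (nuclear plus two $\ell_1$'s) without cross-contamination between the $\mathbf{L}$ and $\mathbf{X}\mathbf{H}\mathbf{Z}$ pieces — which can be collinear in finite samples — is where the delicate restricted-strong-convexity argument enters, mirroring the compatibility condition used in joint low-rank plus sparse recovery.
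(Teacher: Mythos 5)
Your proposal is correct in outline and, for the mean-squared-error claim, follows essentially the same path as the paper: both start from the basic inequality implied by optimality of the estimator, reduce the problem to bounding the noise cross-term $\langle \Ubf,\Xi\rangle$ with $\Xi=\Yhoe-\Yoe$, split that term across the three additive components, and control each piece by the dual-norm pairing (spectral$\times$nuclear for $\Lbf$, $\ell_\infty\times\ell_1$ for $\Hbf$ and $\bm{\beta}$, the latter being exactly Lemma~\ref{lem:trace}). The differences are in how the pieces are closed. You work with the penalized form and choose $\lambda_L,\lambda_H,\lambda_\beta$ to dominate concentration bounds on the dual norms; the paper instead works with the constraint form (condition (L\ref{lem:consistency}.2) bounds $\norm{\Lbf}_*,\norm{\Hbf}_{1,e},\norm{\bm{\beta}}_{1,e}$ a priori), invokes Lemma~H.10 of \cite{chernozhukov2021exact} for $\norm{\Ubf}$, and simply posits the constants $\delta_H,\delta_\beta$ in Lemma~\ref{lem:trace} — your version is the more self-contained of the two, at the cost of extra sub-Gaussianity/bounded-covariate hypotheses. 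The larger divergence is the pointwise claim on $\Mmc$: you attempt to genuinely derive it via incoherence/spikiness of $\Lbf^*$, clipping, and entrywise control of each fitted component, which forces you to confront the separation of $\hat{\Lbf}-\Lbf^*$ from $\Xbf(\hat{\Hbf}-\Hbf^*)\Zbf$ and hence a compatibility/restricted-strong-convexity condition. The paper sidesteps all of this by imposing condition (L\ref{lem:consistency}.7), which directly assumes that the sup error over $\Mmc$ is bounded by $\nu_{N,T}$ times the average squared error of $\Xi$ as a whole — so no component-wise entrywise bounds and no cross-contamination issue ever arise. Your route buys a proof from more primitive conditions; the paper's buys brevity by elevating the hard step to a high-level assumption. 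Both are legitimate given that the lemma explicitly defers its hypotheses to "the conditions stated at the beginning of the proof," but you should be aware that the delicate step you flag at the end is precisely the step the paper chooses not to prove.
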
\par
We make use of permutations to compute the p-values. 
\begin{definition}[Set of permutations] \label{def:permutations}
If Assumption \ref{ass:regularity of stochastic shock}.1 holds, a permutation is a one-to-one mapping $\pi : \{1,\dots,N\}\times\{1,\dots,T\} \mapsto \{1,\dots,N\}\times\{1,\dots,T\}$. In case of Assumption \ref{ass:regularity of stochastic shock}.2, a permutation is the set of horizontal moving block permutations for each individual $\pi=\{\pi_i^{t'}\}_{i=1}^N$ where the respecting observation-internal moving block permutation is defined as  $$\pi_{t'}(t)=\floor{t+t'}_T$$ 
The set of all possible permutations is denoted by $\Pi$. For each $\pi \in \Pi$, we define the matrix of permuted residuals $\hat{U}_\pi \coloneqq \left[\hat{U}_{\pi(i,t)}\right]_{i,t} $, or $\hat{U}_\pi \coloneqq \left[\hat{U}_{i,\pi(t)}\right]_{i,t} $ respectively.
\end{definition}
For the finite sample validity of the matrix completion estimator, the choice of $\Pi$ is irrelevant. However it is worth to note that set of one-to-one mappings on $ \{1,\dots,N\}\times\{1,\dots,T\}$ is larger than the set of horizontal moving block permutations allowing for the computation of more precise p-values under Assumption \ref{ass:regularity of stochastic shock}.1. 

\begin{definition}[Definition of p-value]
    The p-value is defined as $$\hat{p}=1- \hat{F}\left(S(\hat{U})\right), \;\text{where}\; \hat{F}(x)=\frac{1}{\card{\Pi}}\sum_{\pi \in \Pi} \Ibf\left\{S\left(\hat{U}_\pi\right)<x\right\}$$
\end{definition}

\begin{theorem}[Approximate validity of p-value] \label{th:pvalue}
Suppose that Assumptions \ref{ass:regularity of stochastic shock}-\ref{ass:consistency} hold. Then under the null hypothesis, the p-value is approximately unbiased in size: $$ \card{P[\hat{p}\leq\alpha]-\alpha} \leq C(\tilde{\delta}_{N,T} + \delta_{N,T}+\sqrt{\delta_{N,T}}+\gamma_{N,T})$$ where $\tilde{\delta}_{N,T}=(\card{\Mmc}/\card{\Omc})^{1/4}(\log(NT))$ and the constant $C$ depends on $\card{\Mmc}$ but not on $N$ and $T$.
\end{theorem}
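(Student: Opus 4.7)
The plan is to follow the template of \cite{chernozhukov2021exact} for permutation-based inference, augmented by the regularized consistency guarantees of Lemma \ref{lem:consistency}. First I would introduce the oracle empirical CDF $\tilde{F}(x) = \card{\Pi}^{-1}\sum_{\pi\in\Pi} \Ibf\{S(U_\pi)<x\}$ built from the unobserved shocks $U$, together with the associated oracle p-value $\tilde{p} = 1-\tilde{F}(S(U))$. Under the null, $\hat{U}_{it} - U_{it} = \Yoe_{it} - \Yhoe_{it}$, so Assumption \ref{ass:consistency}.2 on $\Omc$ combined with Lemma \ref{lem:consistency} on $\Mmc$ delivers a uniform pointwise bound $\card{\hat U_{it} - U_{it}} \le \delta_{N,T}$ over the full panel with probability at least $1-\gamma_{N,T}$. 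The target statement then follows from the triangle decomposition
\[
\card{P[\hat{p}\le\alpha]-\alpha} \le \card{P[\hat{p}\le\alpha]-P[\tilde{p}\le\alpha]} + \card{P[\tilde{p}\le\alpha]-\alpha},
\]
with the two pieces handled separately.

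\textbf{Feasible-to-oracle step.} Next I would show that $S(\hat U)$ and each $S(\hat U_\pi)$ are uniformly close to $S(U)$ and $S(U_\pi)$. A direct triangle inequality yields $\card{S(\hat U)-S(U)} \le \card{\Mmc}^{-1}\sum_{(i,t)\in\Mmc}\card{\hat U_{it}-U_{it}} \le \delta_{N,T}$. For the permuted statistics the cells entering $S(\hat U_\pi)$ are the images $\pi(i,t)$ of treated cells and can lie anywhere in the panel; the uniform pointwise bound above thus gives $\sup_{\pi\in\Pi}\card{S(\hat U_\pi)-S(U_\pi)} \le \delta_{N,T}$ on the good event. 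Translating statistic-level proximity into CDF-level proximity invokes Assumption \ref{ass:test_statistic}: the bounded density of $S(U)$ supplies anti-concentration, so an argument shift of size $\delta_{N,T}$ alters probabilities by at most $O(\delta_{N,T})$. A truncation at scale $\sqrt{\delta_{N,T}}$, treating separately the rare event that any permuted statistic exceeds this threshold of deviation, controls $\card{P[\hat p\le\alpha] - P[\tilde p\le\alpha]}$ by $C(\delta_{N,T}+\sqrt{\delta_{N,T}}+\gamma_{N,T})$.

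\textbf{Oracle uniformity step.} Then I would bound $\card{P[\tilde p\le\alpha]-\alpha}$ via exchangeability arguments. Under Assumption \ref{ass:regularity of stochastic shock}.1 the iid structure yields $(U_{\pi(i,t)})\stackrel{d}{=}(U_{i,t})$ for every $\pi\in\Pi$, so $\tilde p$ is stochastically dominated by a Uniform[0,1] variable up to a discretization error of order $\card{\Pi}^{-1}$, which is absorbed by $\tilde\delta_{N,T}$. Under Assumption \ref{ass:regularity of stochastic shock}.2, horizontal moving-block permutations preserve marginals exactly by stationarity, while the joint distribution of the shifted residual process couples to the original one at a rate governed by the mixing coefficients summing to $M$. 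A Bradley-type coupling together with a block Berry--Esseen bound converts the mixing into CDF approximation for $\tilde F$; a union bound over $\pi\in\Pi$ and tracking of the effective block size through the ratio $\card{\Mmc}/\card{\Omc}$ then align the exponents to produce the advertised rate $\tilde\delta_{N,T}=(\card{\Mmc}/\card{\Omc})^{1/4}\log(NT)$.

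\textbf{Main obstacle.} I expect the hardest step to be the oracle uniformity under Assumption \ref{ass:regularity of stochastic shock}.2: exchangeability is only approximate there, so quantifying the closeness between the true distribution of $S(U)$ and its block-permutation law demands careful mixing-coefficient arithmetic combined with Assumption \ref{ass:test_statistic}, and it is precisely here that the $(\card{\Mmc}/\card{\Omc})^{1/4}$ factor of $\tilde\delta_{N,T}$ is generated. The uniform-in-$\pi$ control in the feasible-to-oracle step is also a potential snag, but is handled cleanly by the cell-wise uniform bound on $\hat U - U$, sidestepping any additional union bound over parameter estimation errors.
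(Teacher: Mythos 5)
Your overall strategy is the right one and matches the paper's: the proof there is exactly the Chernozhukov--Wuthrich--Zhu template, reducing the claim to their Lemma~H.1 and then verifying its two conditions --- approximate ergodicity of the randomization distribution $\hat F$ for the population CDF $F$ of $S(U)$ (condition (E)), and smallness of the estimation errors $S(\hat U_\pi)-S(U_\pi)$ (condition (A)) --- with the bounded density of Assumption~\ref{ass:test_statistic} supplying the anti-concentration. Your ``oracle uniformity'' and ``feasible-to-oracle'' steps are essentially the internals of that lemma re-derived, so the decomposition is sound.

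Two points of bookkeeping diverge from the paper and one of them matters for producing the advertised rate. First, you locate the source of $\tilde\delta_{N,T}=(\card{\Mmc}/\card{\Omc})^{1/4}\log(NT)$ in the mixing case; in the paper it is the \emph{iid} ergodicity lemma that generates the $\card{\Mmc}/\card{\Omc}$ ratio, by tiling $\Omc$ with $\floor{\card{\Omc}/\card{\Mmc}}$ disjoint copies of $\Mmc$ and applying Dvoretsky--Kiefer--Wolfowitz plus Markov to get $\sup_x\card{\hat F(x)-F(x)}\lesssim\sqrt{\card{\Mmc}/\card{\Omc}}$; the moving-block case instead yields a $\sqrt{\card{\Mmc}_{\max}/T}\,\log T$ rate via Rio's bound for strongly mixing empirical processes. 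Relatedly, your iid oracle step (``uniform up to discretization error $\card{\Pi}^{-1}$'') cannot by itself deliver the theorem's bound: converting the statistic-level proximity $\card{S(\hat U_\pi)-S(U_\pi)}\le\delta_{N,T}$ into probability-level proximity requires anti-concentration of the \emph{randomization} CDF near $S(U)$, which you can only get by first establishing $\sup_x\card{\hat F(x)-F(x)}\le\tilde\delta_{N,T}$ and then using the bounded density of $F$; the bounded density of $S(U)$ alone does not control the empirical $\hat F$. So the ergodicity bound is load-bearing in your feasible-to-oracle step as well, not only in the oracle step. Second, for condition (A.1) the paper does not use a $\sup_\pi$ pointwise bound (which would require a uniform rate on the treated cells that Lemma~\ref{lem:consistency} states only as $o_P(1)$); it instead bounds the \emph{average} over $\pi$ of $(S(\hat U_\pi)-S(U_\pi))^2$ by a counting argument --- each cell occupies a given position in $(NT-1)!$ of the $(NT)!$ permutations --- and then invokes the panel-wide mean-squared-error condition of Assumption~\ref{ass:consistency}.1. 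That route is both weaker in what it demands of the estimator and is what the stated assumptions actually support.
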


Note that the results of Theorem \ref{th:pvalue} are non-asymptotic, i.e. they hold in finite sample. The finite sample bounds of the size properties imply that the inference procedure is exact in $NT\rightarrow\infty$. Naturally, the power of the hypothesis testing depends on characteristics of the underlying model \eqref{eq:Full model}. For instance, if the model is estimated more accurately or if the shocks have a smaller variance, the hypothesis tests tend to have higher power.

\section{Simulation}\label{sec:Simulations}
In the following section, the results are only presented for matrix $\Hbf$ that links the unit-specific and the time-specific covariates. All findings equivalently apply for $\bm{\beta}$, the parameter of the unit-time-varying covariates. The corresponding figures are presented in Appendix~\ref{app:Simulation Results beta}.

\subsection{Data-generating process}
For the subsequent simulations, we use the following data-generating process: The outcome matrix follows the linear model in~\eqref{eq:Full model} with a homogeneous treatment effect\footnote{The matrix completion estimator can naturally deal with heterogeneous treatment effects. As our object of interest is the average treatment effect on the treated, we stay with homogeneous effects for the simulation.}, interactive unit- and time-specific covariates, unit-time varying covariates, unit fixed-effects and time fixed-effect terms. The treatment allocation $\mathbf{W}$ follows a Bernoulli model. The latent factor model matrix $\Lbf$ is random matrix restricted to a specified $\text{rank}_L$ with exponentially distributed singular values. The unit-specific covariates of size $p$ are multi-variate normal with an idiosyncratic unit factor, time-specific covariates of size $q$ are generated accordingly. The coefficients of $\Hbf$, the link between those covariates, are normally distributed, but only a random fraction is \textit{active} while the other elements are set to 0. The data-generating process omits linear terms in the unit- or time-specific covariates as discussed in Section~\ref{sec:Data-driven model selection}. The unit-time varying covariates $\Vbf_{it}$ are independent standard normal and of the true coefficients $\bm{\beta}$, again, only a fraction is active. \par
The exact formulations of all components and the default parameter values are described in Appendix~\ref{app:Formulation of data generating process}.

\subsection{Choice of optimal penalty parameters by cross-validation}

Figure~\ref{fig:fan_H} shows the paths of the estimated coefficients in $\hat{\Hbf}$ for different values of $\lambda_{H}$. We observe the known pattern of $l_1$ regularisation that coefficients with small attributed parameter values are wiped out for already low penalty parameters. With increasing regularization, one variable by another is eliminated until even the variables with largest absolute coefficient estimates are regularized out. The descriptive illustrations on the size and the number of coefficient parameters in Appendix~\ref{app:Choice of optimal penalty parameter} confirm that the elimination process closely follows the order of parameter estimates in the absence of very strong correlation patterns among the covariates.
\par 
\begin{figure}[h]
    \centering
    \includegraphics[width=0.7\textwidth]{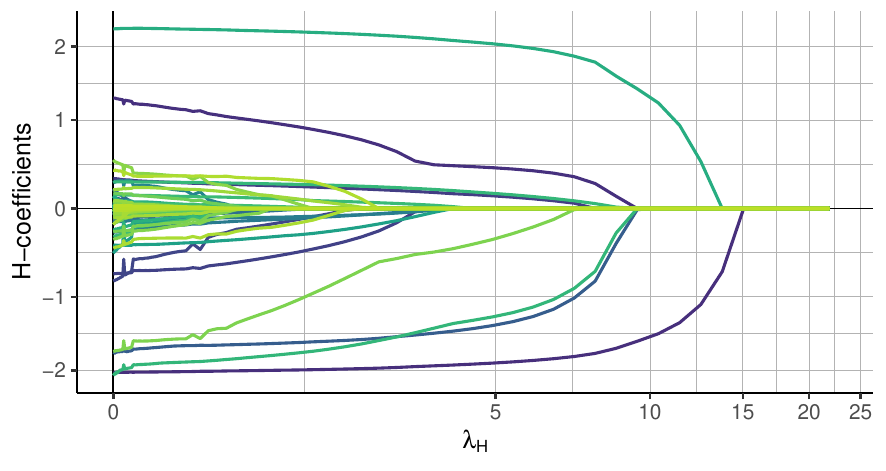}
    \caption{Path of estimated coefficients in $\hat{\Hbf}$ for different values of regularization parameter $\lambda_{H}$}
    \label{fig:fan_H} 
\end{figure}

As discussed in Section~\ref{sec:Data-driven model selection}, the optimal values of the penalty parameters $\left(\lambda_{L}, \lambda_{H}, \lambda_{\beta}\right)$ are chosen by cross-validation. Figure~\ref{fig:lambda_H_cv} shows the mean prediction error on the test sample over the cross-validation folds for different values of the regularization parameter $\lambda_H$. The convex optimization along the penalization parameter clearly shows itself. \par
\begin{figure}[h]
    \centering
    \includegraphics[width=0.7\textwidth]{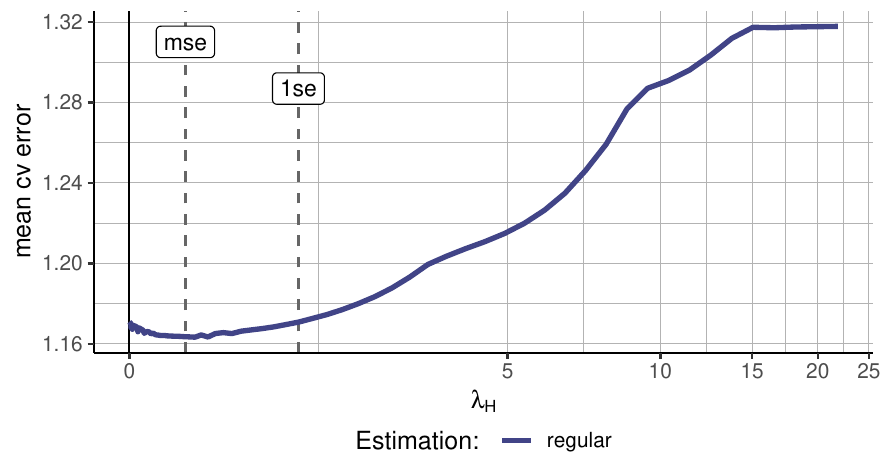}
    \caption{Mean error of out-of-sample prediction over cross-validation folds of $\hat{\Hbf}$ for different values of regularization parameter $\lambda_H$ \newline{\small Note: 'mse' and '1se' show determined $\lambda_H$ based on \textit{mse} and \textit{1se} optimality criteria.}}
    \label{fig:lambda_H_cv}
\end{figure}
As in each fold, the objective function is only evaluated on a subsample of the available data, the cross-validation optimization is not necessarily optimal for the out-of-sample observations. A known issue is the estimation error of risk curves using cross-validation samples \citep{hastie2009elements}. Because of this estimation error, model selection using cross-validation tends to be too conservative while in fact, the smallest model is preferred among a set of indistinguishable models~\citep{Chen2021}. To address this, we implement the common \textit{1se} solution, which selects for the largest value of the penalization parameter for which the objective function remains below the minimum value plus one standard error over the cross-validation folds calculated at the optimal $\lambda_H$ position. The results on the number and size of the non-zero coefficients in Appendix~\ref{app:Choice of optimal penalty parameter} strongly support the application of the \textit{1se} solution.\par 

\subsection{Accuracy of treatment effect estimates}
\label{sec:Accuracy of treatment effect estimates}
In this subsection, we evaluate the estimation error over multiple samples to remove the dependency of the measured estimation error on a specific randomly generated sample. The plots in the subsequent subsections use the abbreviation as described in Table~\ref{tab:Model_abbreviations} to denote different estimation methods.\par

\begin{table}[ht]
	\centering
	\caption{Abbreviation for estimation methods}
	\begin{tabularx}{0.9\textwidth}{lX}
        \toprule
		\textit{no\_reg} & Estimation as in \cite{athey2021matrix} without any covariate regularization. \\
        \textit{imp0} & Estimation with covariate regularization with imposed null hypothesis.\\
        \textit{imp0\_rot} & \textit{imp0} estimation using the rule-of-thumb correction.\\
		\textit{imp0\_post} & Two-stage estimation with model/rank-selection from 'imp0' estimation and unregularized post step.\\
		\textit{imp0\_1se} & \textit{imp0} estimation using '1se' optimality criterion in cross-validation.\\
		\textit{not0} & Estimation with covariate regularization without imposing the null hypothesis. (Does not allow for inference on effect estimates)\\
		\bottomrule
	\end{tabularx}
	\label{tab:Model_abbreviations}
\end{table}
Figure \ref{fig:sim_boxplot_sample_size_tau} shows the boxplot of ATET estimates by different sample sizes for the six versions of the matrix completion estimator as outlined in Table~\ref{tab:Model_abbreviations}. All estimators using covariate regularization perform a great deal better, in particular for small sample sizes. However, the versions imposing the null hypothesis exhibit a substantial downward bias, which is the price paid for enabling the inference procedure. With a decent sample size, the rule-of-thumb correction is able to eliminate the bias. The matrix completion estimator without imposed null, as in \cite{athey2021matrix}, but using covariate regularization is unbiased and performs already excellently for very small sample sizes. \par
The fit of the post-regularization model is consistently very poor. It can be shown that the estimates of $\hat{\Lbf}$ are closely tied to the regularization by the penalty term. When dropping the penalization parameter in the post-regularization step, the coefficient in $\hat{\Lbf}$ strongly diverge from the true values.\par
\begin{figure}[h]
    \centering
    \includegraphics[width=0.7\linewidth]{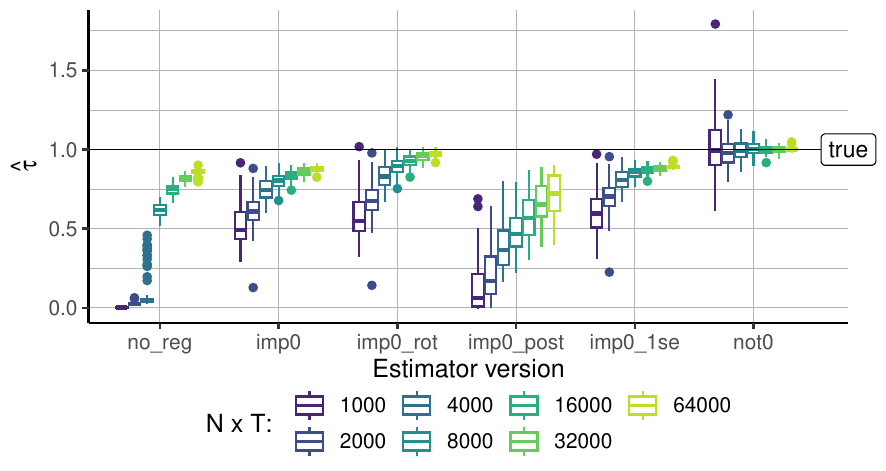}
    \caption{Boxplot of $\hat{\tau}$ by estimation method.\newline{\small Note: $N=100$, $T \in \{10, 20, 40, 80, 160, 320, 640\}$. 700 runs per sample size. True value of $\tau=1$.}}
    \label{fig:sim_boxplot_sample_size_tau} 
\end{figure}
To undertake the estimation accuracy in a more detailed evaluation, Figure~\ref{fig:sim_sample_size_mise} shows the median indexed squared error of the treatment effect where the squared errors in each sample are divided by the squared estimation error of the baseline \textit{imp0} estimator. Hence, a value below 1 denotes that an estimator performs on average better than the basic regularized estimator with imposed null.\footnote{Note that the median indexed squared error is an aggregate of a ratio and has to be interpreted accordingly.} This illustration refines and accentuates the insights from Figure~\ref{fig:sim_boxplot_sample_size_tau} that the enabled inference procedure by imposing the null hypothesis comes at a substantial price in treatment effect estimation accuracy. With increasing sample size, the rule-of-thumb correction achieves similar precision as the matrix completion estimator without the null hypothesis imposed. Thus, if inference on the treatment effect estimator is desired, applying the rule-of-thumb correction can partially compensate for the accuracy loss by the implied null hypothesis for the inference procedure. \par
\begin{figure}[h]
    \centering
    \includegraphics[width=0.7\textwidth]{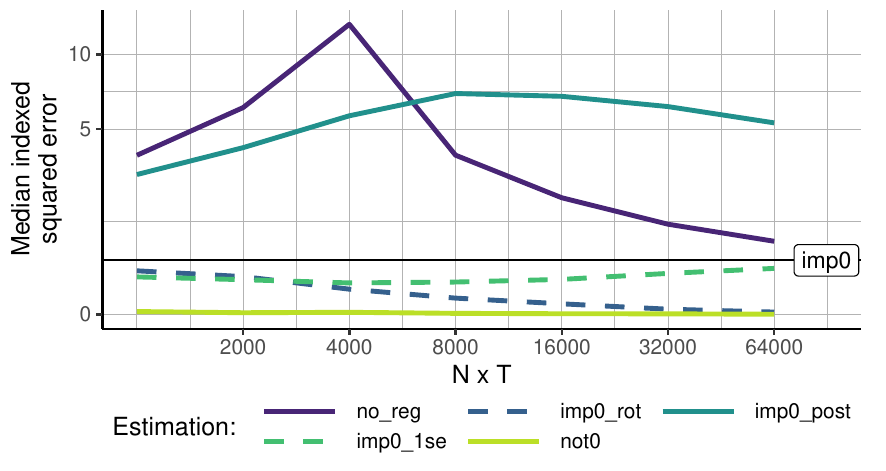}
        \caption{Median indexed squared error of $\hat{\tau}$ by estimator version.\newline{\small Note: \textit{Indexed} measures are divided by the squared estimation error of the \textit{imp0} model. $N=100$, $T \in \{10, 20, 40, 80, 160, 320, 640\}$. 700 runs per sample size. Transformed y-axis.}}
        \label{fig:sim_sample_size_mise}
\end{figure}

The simulation results further evince that the reduction in model size confers benefits to estimation accuracy, particularly in the context of small sample sizes. The lower degrees of freedom of the fitted model result in better predictions for the potential outcomes under no-treatment of the treated observations with the available data. For larger sample sizes, there is enough information in the data such that fitting the full model without covariate regularization picks up less spurious correlations of non-explanatory covariates such that the estimation accuracy levels with the regularized version of the estimator. Nevertheless, the gains in accuracy achieved through by applying the estimate correction or performing an estimation without inference are still massive compared to the unregularized estimation. \par

In Appendix \ref{app:sim_signal_strength}, we demonstrate that the bias of the treatment effect estimates is not negatively affected by a low signal-to-noise ratio, in fact, the bias is slightly decreasing for weaker signals. However, for a low signal-to-noise ratio, the variance of the treatment effect estimate is substantial. The relative performance of the different estimator versions reflects the pattern observed in Figures \ref{fig:sim_boxplot_sample_size_tau} and \ref{fig:sim_sample_size_mise}:
The estimator without imposing the null hypothesis is superior for all settings and the rule-of-thumb correction \textit{imp0\_rot} is consistently the best-performing estimator among the null hypothesis versions, in particular for low signal-to-noise ratios.

\subsection{Model selection property}
\begin{figure}[h]
    \centering
    \includegraphics[width=0.7\textwidth]{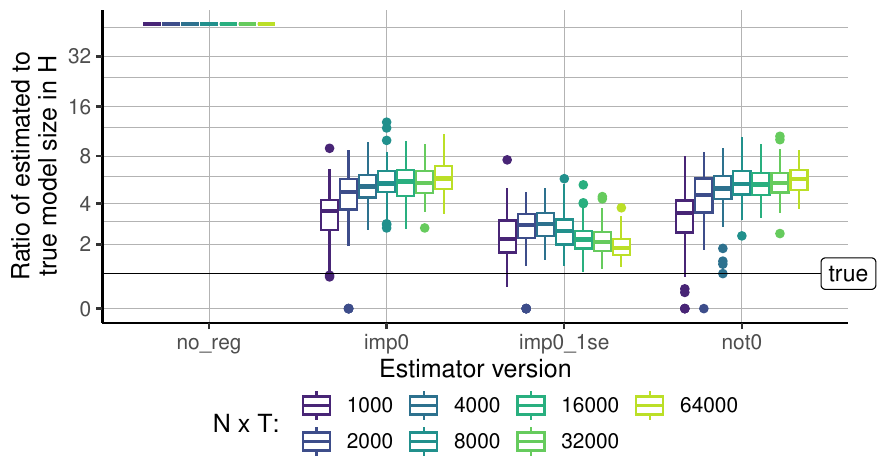}
    \caption{Ratio between the estimated and true model size in $\Hbf$.\newline{\small Note: $N=100$, $T \in \{10, 20, 40, 80, 160, 320, 640\}$. 700 simulations per sample size. The number of non-zero elements is equal for regular, rule-of-thumb correction, and post-regularization estimation. Transformed y-axis.}}
    \label{fig:sim_boxplot_sample_size_H_size_ratio} 
\end{figure}
We assess the dimensions of the ascertained model, defined by the count of non-zero coefficients in the matrix $\Hbf$, in comparison to the true size of the generated sample across varying sample sizes. It is pertinent to note that the post-regularization and rule-of-thumb correction estimates have no impact on the model size. Consequently, we display only the unregularized estimation, the regularization approach employing \textit{mse} and \textit{1se} optimality criterion, and the estimator not imposing the null hypothesis.\par

Figure~\ref{fig:sim_boxplot_sample_size_H_size_ratio} illustrates a substantial reduction in the number of non-zero coefficients within the estimated matrix $\Hbf$ when utilizing a regularized estimator.\footnote{It is worth acknowledging that the actual count of non-zero coefficients in $\Hbf$ may deviate from the DGP parameter $h_{\text{size}}$ under conditions where $p<\bar{p}$ or $q<\bar{q}$.} Notably, the MSE-optimal cross-validation results in an overestimation, maintaining approximately five times the number of covariates within the determined model across all sample sizes. Conversely, the 1se variant consistently selects a model size that closely aligns with the true model size. As the sample size increases, there appears to be a convergence of the estimated model size towards the true value.\par

Additionally, in Appendix \ref{App:H_mse_sample_size}, we provide evidence that the MSE of the estimated coefficients in $\Hbf$ exhibits a strong decrease as the sample size diminishes. The concurrent enhancements in both model size and coefficient accuracy contribute to the augmented precision in treatment effect estimates for larger sample sizes which has been observed in Section \ref{sec:Accuracy of treatment effect estimates}.

In Appendix \ref{app:model_selection_signal_strength}, we illustrate that the determined model size persists is stable as the signal exhibits substantial strength. Nevertheless, as the signal-to-noise ratios approach very small magnitudes, rendering the task of segregating relevant information from noise more difficult, all estimators tend to identify a reduced number of informative parameters within the model. This discernment, in turn, can lead to estimated models that are more sparse than the true model.

\section{Illustration}\label{sec:Illustration}
To showcase the usefulness of the proposed estimator in an empirical context, we apply the developed methodology to scrutinize the impact of governmental regulations aimed at mitigating the Covid-19 pandemic. The period encompassing the peak of the global pandemic witnessed an intense public discourse regarding the appropriateness and effectiveness of public health measures designed to curb the spread of SARS-CoV-2. This discourse extended into the academic domain, putting forth extensive evaluations of, e.g.,  governmental responses to the pandemic \citep{Christensen2023Nordic, basseal2023key} or public compliance with health regulations \citep{scandurra2023people}.\par

In this illustrative analysis, we delve into the effect of international travel restrictions, a measure identified as highly effective by \cite{basseal2023key}. Our focus is on providing a pure analytical assessment of travel restrictions as a public health intervention. The geopolitical implications of travel bans and the suspension of visa exemptions are thoughtfully explored in \cite{seyfi2023covid}.\par

We analyze the impact of Covid-19 testing obligations upon entry from foreign countries with elevated incidence rates on the frequency of infections necessitating treatment in intensive care units (ICU). The testing obligation at entry from high-risk regions was partially implemented in Germany from the summer of 2020 to the spring of 2022.\footnote{The authority for public health regulations in Germany has been dispersed to various administrative levels with a general trend to transition from individual districts to state-harmonized policies over time. Some regions temporarily intensified regulations by imposing testing obligations on all incoming travellers, which is not being distinguished in this analysis.} To circumvent distortions in the outcome measure arising from limited testing capacities during the early stages of the pandemic, we confine the sample period to July 2020 to June 2022, utilizing weekly frequency. Additionally, to ensure robustness in our outcome measure, we exclude districts with fewer than 10 intensive care beds. It is noteworthy that, in the remaining districts, no district operated at maximum ICU occupancy for a significant duration, ensuring consistent reporting and treatment within the same district. With these restrictions, our analysis spans a panel of 342 units over 105 time periods.\par

The adoption of the ratio between patients treated in ICU due to Covid-19 infections and the number of reported infections as the outcome is appealing for two reasons. Firstly, travel restrictions primarily aim to avert the overwhelming strain on medical facilities by delaying the introduction of new and potentially threatening virus mutations. This delay allows more time for the refinement of vaccines to address emerging variants. Secondly, utilizing the frequency of severe infections as the outcome mitigates common concerns related to endogeneity and reverse causality when using plain incidence rates or vaccination rates as outcomes.\footnote{The median duration of ICU treatment for Covid-19 patients is between 10 to 14 days and exhibits a right-skewed distribution \citep{shryane2020length, kaccmaz2023covid}. Consequently, during periods of declining infection rates, the ratio of patients admitted to the ICU to the reported incidence may exceed 1. A direct mapping of ICU patients to an infection is not possible. Consequently, the metric employed in this study should be interpreted as a proxy for the severity of infections. For a more comprehensive examination of the epidemiological landscape extending this illustrative example, the adoption of more sophisticated metrics becomes requisite.}\par

Given the extensive research on the medical, economic, and social impact of the pandemic, comprehensive macroeconomic data is available for the selected time frame. The German Corona-Datenplattform\footnote{\url{https://www.healthcare-datenplattform.de/}} serves as a primary data source, offering a rich collection of macroeconomic, infection, and policy data. The dataset encompasses 91 unit-time-specific covariates (encompassing various public health regulations, vaccination prevalence, and short-time work data), 55 unit-specific covariates (encompassing socio-demographics, medical care, infrastructure, and economic composition variables), and 11 time-specific covariates (covering general economic indicators, mobility, tourism, and the prevalence of threatening mutations among all reported infections). The complete datasets, along with a detailed description of all variables, can be accessed on the Harvard Dataverse \citep{DVN/JGGBQG_2024} related to this project. \par

The provided context and dataset serve to exemplify the estimation and model selection characteristics of the proposed estimator within an empirical application. The application of the panel data model \eqref{eq:Full model}requires the estimation of 1143 parameters.\footnote{An extension of the model to \eqref{eq:Full model linear}, which incorporates linear terms in $\Xbf$ and $\Zbf$, expands the parameter count to 10,678. While technically feasible, estimating this augmented model is deemed impractical due to the exceedingly small signal-to-noise ratio within the specified application setting.} Table \ref{tab:Application_results} summarizes the estimation outcomes and resultant model sizes across the various versions of the estimator, denoted by previously introduced abbreviations detailed in Table \ref{tab:Model_abbreviations}. Notably, the estimated treatment effects are very close to 0. The p-values, approaching unity for the regularized estimators with imposed null hypothesis, suggest an increased precision in estimating potential outcomes for treated observations. This poses a challenge to the regularity assumption of the stochastic shock, as outlined in Assumption \ref{ass:regularity of stochastic shock}. Given the non-rejection of the hypothesis asserting the absence of a treatment effect, there is no compelling argument to accord greater credibility to the results obtained through the \textit{not0} estimation procedure. \par

\begin{table}[ht]
	\centering
	\caption{Estimation results and model sizes by estimator version}
	\begin{tabularx}{0.9\textwidth}{lYYYYY}
         & \textit{no\_reg} & \textit{imp0} & \textit{imp0\_1se} & \textit{not0} & \textit{not0\_1se}\\
        \toprule
        ATET & 0.00007 & 0.00014 & 0.00550 & -0.00052 & 0.00674\\
        $\text{ATET}_\text{rot}$ & 0.00007 & 0.00015 & 0.00586 & - & - \\
        p-value & 0.496 & 0.98 & 1 & - & . \\
        \midrule
        $\hat{\Hbf}$ size & 605 & 107 & 1 & 2 & 1 \\
        $\hat{\bm{\beta}}$ size & 91 & 0 & 0 & 0 & 0 \\
        $\hat{\Lbf}$ rank & 25 & 76 & 1 & 71 & 1 \\
		\bottomrule
	\end{tabularx}
    \caption*{\footnotesize Note: The model size is determined by the number of non-zero elements in $\hat{\Hbf}$ and $\hat{\bm{\beta}}$, as well as the number of singular values of $\hat{\Lbf}$.}
	\label{tab:Application_results}
\end{table}

The lower segment of Table \ref{tab:Application_results} elucidates the resultant model sizes derived from the estimation process. Notably, the \textit{no\_reg} estimation, which solely regularizes the rank of the unobserved factor matrix $\Lbf$ while leaving covariates unpenalized, manifests the full model in terms of $\Hbf$ and $\bm{\beta}$ parameters.\footnote{Recall that the fixed effects remain unregularized throughout.} Estimators integrating covariate regularization substantially reduce the model size, yielding an exceedingly sparse model with merely one covariate when employing the \textit{1se}-criterion. This observation suggests the presence of very weak signal influencing the severity of Covid-19 infection trajectories within the variables of the dataset. The single parameter persisting the model selection process links the proportion of students and the proportion of infections attributed to the Alpha variant (B.1.1.7) of the Covid-19 virus. These findings hint at a notable association between this mutation and an increased incidence of patients necessitating intensive care unit (ICU) treatment, with young individuals in educational settings possibly serving as significant channel for viral transmission. However, it is essential to underscore that this interpretation hinges solely on the interpretation of model parameters and does not denote an identified causal effect.

\section{Conclusion}\label{sec:Conclusion}
We present novel findings concerning $l_1$ covariate regularization within matrix completion methods. A comprehensive simulation study illustrates that this form of regularization significantly diminishes the model size. The determination of penalization parameters through cross-validation, employing the \textit{1se} optimality criterion, yields a model size that converges to the true value as the panel size increases. \par

Moreover, we establish the applicability of the permutation-based inference procedure proposed by \cite{chernozhukov2021exact} to the extended estimator incorporating covariate regularization. Additionally, we demonstrate its validity under any treatment assignment mechanism. While enforcing the null hypothesis of a treatment effect absence, a prerequisite for applying the inference procedure, introduces a downward bias in effect estimates, simulation results indicate that this bias can be substantially reduced, and even completely mitigated for larger sample sizes, through a simple rule-of-thumb correction. \par

The proposed estimator, utilizing both \textit{1se}-regularization and the estimate correction when inference is necessary, displays superior prediction accuracy compared to the baseline matrix completion estimator introduced by \cite{athey2021matrix}. The latter has been demonstrated to outperform other common estimators in panel data regression methods. Apart from improved treatment effect estimates, our proposed estimator additionally features a robust model selection property and enables valid finite sample inference. \par

Future research avenues may include enhancing the two-stage procedure by integrating a more suitable post-model-selection estimation. The application of matrix completion methods in the second step, as evidenced by our simulation study, results in inferior prediction accuracy. Another potential direction for future work involves incorporating an alternative inference procedure that does not impose bias on treatment effect estimates. For instance, a Bayesian inference approach simulating posterior draws via a Markov Chain Monte Carlo sampler, as proposed by \cite{tanaka2021bayesian}, could be explored. \par

Additionally, our proposed estimator can be linked to the broader literature that facilitates the application of matrix completion methods to patterns of missing data that are not completely random, as explored by \cite{bhattacharya2022matrix}, \cite{agarwal2021causal}, and \cite{bai2021matrix}.

\bibliography{mcms_bib.bib}

\newpage
\appendix

\section{Background on matrix completion estimators in the context of causal panel data models}\label{app:Matrix completion estimator}

The econometric literature on panel data with binary treatment exposure has recently evolved in four main streams: The literature on unconfoundedness~\citep{rosenbaum1983central} imputes data for missing potential outcomes using observed outcomes of comparable units in previous periods (See~\cite{arkhangelsky2022doubly} for recent advances). The literature on synthetic control \cite{Abadie2003Synthetic, abadie2010synthetic, amjad2018robust, ben2021augmented, Kellogg2021matching} imputes missing potential outcomes by creating a hypothetical but representative control unit. The literature on factor (or interactive effects) models estimates unobserved outcomes as the sum of a low-rank factor structure and a linear function of covariates~\citep{bai2003inferential, xu2017generalized, Fan2021Recent}. Finally, the difference-in-difference literature relies on the assumption that all average outcomes follow a parallel path over time in the absence of treatment~\citep{sant2020doubly, de2020two,callaway2021difference, goodman2021difference, arkhangelsky2021synthetic}.\par

The suitability of the respective methods in a particular application is mainly determined by the structure of the data. For instance, the unconfoundedness literature typically applies to settings with a single treatment period, making it suitable for \textit{horizontal} regression. On the other hand, the synthetic control literature is well-suited for settings with one (or few) treated units, making it a \textit{vertical} regression framework. In horizontal regression, the focus is on stable patterns over time across units, while vertical regression relies on stable patterns over units across time.\par

The matrix completion method for causal panel data models builds upon the literature on factor models and demonstrates that both horizontal and vertical regression can be nested into the matrix completion objective function~\citep{athey2021matrix}. Similar to general factor models, the outcome model consists of an unobserved low-rank matrix plus noise, with an optional linear covariate component. However, matrix completion methods \citep{mazumder2010spectral, candes2010matrix, candes2012exact, Fernandez2021matrix} differ from factor model literature in how they determine the rank. While factor models estimate, bound, or assume the rank to be known, matrix completion methods determine the rank through regularization, using a penalty term in the objective function. Regularization is particularly suitable in causal panel data settings as the focus is on imputing missing elements in the matrix of potential outcomes rather than consistently estimating the underlying factors. \par 

Matrix completion methods are typically constrained to cases where each observation has an independent and non-zero probability of being missing \citep{ma2019missing}. However, \cite{athey2021matrix} derived bounds for estimating the average treatment effect of the treated (ATET) in causal panel data models, even when the missing data patterns are not completely random.\par

To address the inference challenges across the various panel data literature streams, \cite{chernozhukov2021exact} introduced a uniformly valid inference procedure applicable to matrix completion methods as well. Their permutation-based inference offers a generic and robust approach for settings with few treated units, with the treatment occurring uniformly for a defined number of periods at the end of the time interval, while accommodating stationary and weakly dependent shocks.  \par

The efficiency of regularization on the nuclear norm of the unobserved factor matrix in matrix completion methods is closely related to the number of missing elements in the matrix. For instance, the convergence bounds in \citet{athey2021matrix} assume a maximum probability for missing elements per time-period, while \cite{chernozhukov2021exact} requires the number of pre-treatment periods to be larger than the number of periods under treatment. In most applications, more observations are observed in a treated state, leading to a focus on estimating the average treatment effect of the treated (ATET) in causal analysis using matrix completion methods. However, in cases where there is a larger set of observations under treatment, the perspective can be switched accordingly.

\section{Proofs}
\subsection{Additional notation}
We introduce additional notation that will be used in the proofs. For $a,b\in \Rbb: a\vee b \coloneqq \max(a,b)$. Without any subscript, $\norm{\:\cdot\:}$ denotes the Euclidian norm for vectors and the spectral norm for matrices. $\floor{a}$ rounds $a$ down to the nearest integer. $\floor{a}_b$ denotes a $\pmod{b}$. The element-wise matrix multiplication (Hadamard product) is written as $A \circ B$. The maximal amount of treated time periods per observation is defined as $\card{\Mmc}_{\max} =\max_{1\leq i\leq N} \sum_{1\leq t\leq T} \;(i,t) \!\in\! \Mmc$. $\mathcal{Z}$ denotes the set $\{\Lbf, \Xbf, \Hbf, \Zbf, \Vbf, \mathbf{\beta}\}$. We rewrite the term $\Xbf\Hbf\Zbf$ as a matrix-vector multiplications $\Xbf\Hbf\Zbf=\left[\XZbar_{it}\Hbar\right]_{it}$, where the elements in $\Hbar$ are the concatenated columns of $\Hbf$ and the entries in $\XZbar_{it}$ follow from the equality. \par
Note that, to shorten the notation of the subsequent proofs, we redefine the underlying model~\eqref{eq:Full model} by including the fixed effects term in the factor matrix $\bar{\Lbf}\coloneqq\Lbf+ \mathbf{\Gamma}^*\Ibf_T^\top + \Ibf_N(\mathbf{\Delta}^*)^\top$. This does not affect the low-rank assumption on $\Lbf$. For ease of readability, we misuse the notation of $\Lbf$ by meaning $\bar{\Lbf}$. Adaptations to the proofs using the true model~\eqref{eq:Full model} are straightforward. 

\subsection{Proof of Theorem~\ref{th:pvalue}}
The proof follows directly from Lemma~H.1 in \cite{chernozhukov2021exact}. It remains to verify the approximate ergodicity condition (E) and the estimation error condition (A) of Lemma~H.1 in \cite{chernozhukov2021exact} hold in the present setting with additional $l_1$ regularization on the covariate space and treated observations across the whole units/time-space.\par
Let $n=\card{\Pi}$ and $\delta_{1n},\delta_{2n},\gamma_{1n}, \gamma_{2n}$ be sequences of numbers converging to 0.
\begin{itemize}
    \item[(E)] With probability $1-\gamma_{1n}$ the randomization distribution $ \hat{F}(x)=\frac{1}{n}\sum\nolimits_{\pi \in \Pi} \Ibf\left\{S\left(\hat{U}_\pi\right)<x\right\}$ is approximately ergodic for $F(x)=P[S(u(<x)$, namely $\sup_{x\in\Rbb}\card{\hat{F}(x)-F(x)}\leq \delta_{1n}$.
    \item[(A)] With probability $1-\gamma_{2n}$, the estimation errors are small:
    \begin{enumerate}
        \item The mean squared error is small with $n^{-1} \sum\nolimits_{\pi \in \Pi} \left(S(\hat{U}_\pi)-S(U_\pi)\right)^2\leq \delta_{2n}^2$;
        \item The pointwise error at $\pi=\text{Identity permutation}$ is small with $\card{S(\hat{U})-S(U)}\leq \delta_{2n}$;
        \item The pdf of $S(U)$ is bounded above by a constant D.
    \end{enumerate}
\end{itemize}
The following Lemmas \ref{lem:ergodicity_iid}-\ref{lem:estimation_block} verify the conditions (E) and (A), each for the iid permutations and the moving block permutations. Notice that condition (A.3) is already satisfied by Assumption~\ref{ass:test_statistic}. 

\begin{lemma}[Ergodicity condition (E) for iid permutations]\label{lem:ergodicity_iid}
    Suppose that Assumption~\ref{ass:regularity of stochastic shock}.1 and ~\ref{ass:test_statistic} hold, and let $\Pi$ be the set of all permutations. If $\card{\Mmc}<\card{\Omc}$, then $$P\left[\sup_{x\in\Rbb} \card{\hat{F}(x)-F(x)}\leq \delta_{1n}\right]\geq 1-\gamma_T,$$ where $\gamma_T = \sqrt{\pi/(2+2\floor{\card{\Omc}/\card{\Mmc}})}/\delta_{1n}$
\end{lemma}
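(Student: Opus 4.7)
The plan is to reduce the permutation-average $\hat F$ to an expectation of empirical CDFs formed from $k$ iid copies of $S(U)$, and then apply a Dvoretzky--Kiefer--Wolfowitz-type bound. The key observation is that $S(U_\pi)$ depends on $U$ only through its values on $\pi(\Mmc)$; hence if two permutations $\pi,\pi'\in\Pi$ satisfy $\pi(\Mmc)\cap\pi'(\Mmc)=\emptyset$, then $S(U_\pi)$ and $S(U_{\pi'})$ are functions of disjoint collections of iid entries of $U$ (by Assumption~\ref{ass:regularity of stochastic shock}.1 together with Assumption~\ref{ass:identifying condictions}), and therefore independent and each distributed as $S(U)$.

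First, I exhibit a maximal family of such disjoint permutations. Setting $k:=1+\floor{\card{\Omc}/\card{\Mmc}}$, we have $k\card{\Mmc}\le\card{\Mmc}+\card{\Omc}=NT$, so one can select $k$ pairwise disjoint subsets of $\{1,\ldots,N\}\times\{1,\ldots,T\}$ of cardinality $\card{\Mmc}$ each (the first being $\Mmc$ itself) and pick $\pi_1,\ldots,\pi_k\in\Pi$ whose images of $\Mmc$ are exactly these subsets. The statistics $S(U_{\pi_1}),\ldots,S(U_{\pi_k})$ are then iid copies of $S(U)\sim F$.

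The core step is a random-shift coupling. Let $\pi^*$ be uniform on $\Pi$ and independent of $U$. Right-composition by $\pi_j$ is a bijection on $\Pi$, so each $\pi^*\pi_j$ is marginally uniform on $\Pi$; moreover, because $\pi^*$ itself is a bijection, the images $\pi^*\pi_1(\Mmc),\ldots,\pi^*\pi_k(\Mmc)$ remain pairwise disjoint. Define $\tilde F(x;\pi^*):=k^{-1}\sum_{j=1}^k \Ibf\{S(U_{\pi^*\pi_j})<x\}$. Two facts then hold: $E_{\pi^*}[\tilde F(x;\pi^*)]=\hat F(x)$ for every $x$ (by uniformity of each $\pi^*\pi_j$), and conditional on $\pi^*$, $\tilde F(\cdot;\pi^*)$ is the empirical CDF of $k$ iid draws from $F$. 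Jensen's inequality applied pointwise, followed by the supremum, yields $\sup_x\card{\hat F(x)-F(x)}\le E_{\pi^*}\sup_x\card{\tilde F(x;\pi^*)-F(x)}$.

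To close, I apply DKW conditionally on $\pi^*$: $P[\sup_x\card{\tilde F(\cdot;\pi^*)-F}>\varepsilon]\le 2e^{-2k\varepsilon^2}$, whose tail integral equals $\sqrt{\pi/(2k)}$. Fubini then gives $E\sup_x\card{\hat F(x)-F(x)}\le\sqrt{\pi/(2k)}$, and Markov's inequality delivers the bound with $\gamma_T=\sqrt{\pi/(2+2\floor{\card{\Omc}/\card{\Mmc}})}/\delta_{1n}$; Assumption~\ref{ass:test_statistic} ensures DKW applies since $F$ is continuous. The main obstacle is the coupling itself: once the random right-shift $\pi^*$ converts the intractable correlated average over all of $\Pi$ into a mixture of iid-DKW problems, the remainder is mechanical. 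Without this device one would have to confront directly the dense dependence structure of $\{S(U_\pi)\}_{\pi\in\Pi}$, since most pairs of permutations overlap on their images of $\Mmc$, giving non-trivial covariances that are difficult to control uniformly in $x$.
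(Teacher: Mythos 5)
Your proposal is correct and takes essentially the same route as the paper's own proof: the paper likewise partitions $\{1,\dots,N\}\times\{1,\dots,T\}$ into $1+\floor{\card{\Omc}/\card{\Mmc}}$ disjoint blocks of size $\card{\Mmc}$, exchanges the average over $\Pi$ with the average over blocks (your random-shift coupling by $\pi^*$ is exactly this interchange), and then applies Jensen, the Dvoretzky--Kiefer--Wolfowitz inequality to the resulting empirical CDF of iid copies of $S(U)$, the Gaussian tail integral, and Markov's inequality to obtain the identical constant $\sqrt{\pi/(2+2\floor{\card{\Omc}/\card{\Mmc}})}/\delta_{1n}$.
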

\begin{proof}
Recall that in the case of Assumption \ref{ass:regularity of stochastic shock}.1, $\Pi$ is the set of all bijections $\pi$ on $\{1,\dots,T\}$. Let $b(x):x\in \Rbb^{\card{\Omc} \mapsto \Omc}$ be any enumeration of the elements in $\Omc$. Let $k_\Mmc=\floor{\card{\Omc}/\card{\Mmc}}$ and define the set of indices 
$$b_i = \begin{cases}
    \Mmc & i=0\\
    \{b\left((i-1)*\card{\Mmc}+1\right), \dots, b\left((i-1)*\card{\Mmc}+\card{\Mmc}\right)\} & i=1, \dots, k_\Mmc.
\end{cases}$$
Since $S(U)$ only depends on $b_0$, we can define $$Q(x;U_{b_0})=\Ibf\{S(U_{b_0})\leq x\}-F(x).$$
Therefore $$\tilde{F}(x)-F(x)=\card{\Pi}^{-1}\sum_{\pi\in\Pi} Q(x;U_{\pi(b_0)}).$$
Because $Q(x;u)$ only depends on the residuals in $\Mmc$, the value of $\sum\nolimits_{\pi\in\Pi}Q(U_{\pi(b_i)})$ does not depend on $i$. Hence, we can write
\begin{align*}
    \tilde{F}(x)-F(x)&=\card{\Pi}^{-1}\sum_{\pi\in\Pi} Q(x;U_{\pi(b_0)})\\
    &=k_\Mmc^{-1} \sum_{i=0}^{k_\Mmc} \left(\card{\Pi}^{-1}\sum_{\pi\in\Pi} Q(x;U_{\pi(b_i)})\right)\\
    &=\card{\Pi}^{-1}\sum_{\pi\in\Pi}  \left(k_\Mmc^{-1} \sum_{i=0}^{k_\Mmc}Q(x;U_{\pi(b_i)})\right)
\end{align*}
By Jensen's inequality \begin{equation}
    \left[\sup_{x\in\Rbb}\card{\tilde{F}(x)-F(x)}\right] \leq \card{\Pi}^{-1}\sum_{\pi\in\Pi} E\left[\sup_{x\in\Rbb}\card{k_\Mmc^{-1}\sum_{i=0}^{k_\Mmc} Q(x; U_{\pi(b_i)})}\right] \label{eq:jensen}
\end{equation}
We observe that for any $\pi \in\Pi$
\begin{align}
E\left[\sup_{x\in\Rbb}\card{k_\Mmc^{-1}\sum_{i=0}^{k_\Mmc} Q(x; U_{\pi(b_i)})}\right]&=\int_0^1 P\left[\sup_{x\in\Rbb}\card{k_\Mmc^{-1}\sum_{i=0}^{k_\Mmc} Q(x; U_{\pi(b_i)})}>z\right]dz\nonumber \\
&\overset{\text{(i)}}\leq 2\int_0^1 \exp(-2(k_\Mmc +1) z^2)dz \nonumber \\
&< 2\int_0^\infty  \exp(-2(k_\Mmc +1) z^2)dz \nonumber \\
&\overset{\text{(ii)}}< \sqrt{\pi/(2+2k_\Mmc)}\label{eq:dvoretsky},
\end{align}
where (i) follows by the Dvoretsky-Kiefer-Wolfwitz inequality (e.g. Theorem 11.6 in \cite{kosorok2008introduction}) and (ii) is a property of the normal distribution.
Combining \eqref{eq:jensen} and \eqref{eq:dvoretsky} yields $$ \left[\sup_{x\in\Rbb}\card{\tilde{F}(x)-F(x)}\right] \leq \sqrt{\pi/(2+2k_\Mmc)}. $$
The desired result in Lemma~\ref{lem:ergodicity_iid} follows by the Markov's inequality.
\end{proof}

\begin{lemma}[Ergodicity condition (E) for moving block permutations]\label{lem:ergodicity_block}
Suppose that Assumption~\ref{ass:regularity of stochastic shock}.2 and~\ref{ass:test_statistic} hold, and let $\Pi$ be the set of all moving-block permutations. Assume $\card{\Mmc}_{\max}<T$. Suppose that $\forall i: \{u_{i,t}\}_{t=1}^T$ is stationary and strong mixing with $\sum\nolimits_{k=1}^\infty \alpha_\text{mixing}(k)$ is bounded by a constant $M$. Then there exists a constant $c(M)$ such that for any $\delta_{1n}>0$ $$P\left[\sup_{x\in\Rbb} \card{\hat{F}(x)-F(x)}\leq \delta_{1n}\right]\geq 1-\gamma_T,$$ where $\gamma_T = 
\left(c(M)\frac{\sqrt{\card{\Mmc}_{\max}}\log T}{\sqrt{T}}\right)/\delta_{1n}$
\end{lemma}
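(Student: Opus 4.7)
The proof parallels Lemma~\ref{lem:ergodicity_iid}, replacing the Dvoretzky-Kiefer-Wolfowitz inequality (which relied on iid sampling) with a uniform empirical-process bound tailored to strongly mixing sequences. Stationarity of each $\{U_{it}\}_{t=1}^T$ ensures that $U_\pi$ is equal in distribution to $U$ for every $\pi\in\Pi$, so $E[\tilde F(x)]=F(x)$. Jensen's inequality, as in~\eqref{eq:jensen}, together with Markov's inequality therefore reduces the claim to bounding
\[
E\!\left[\sup_{x\in\Rbb}\left|\card{\Pi}^{-1}\sum_{\pi\in\Pi}\bigl(\Ibf\{S(U_\pi)\leq x\}-F(x)\bigr)\right|\right].
\]

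Next, I would exploit the cyclic structure of $\Pi$: for each unit $i$ independently, one averages over the $T$ cyclic shifts $t'\in\{0,\ldots,T-1\}$ of the stationary strong-mixing sequence $\{U_{it}\}_{t=1}^T$. For a fixed $x$, a Berbee-coupling argument (or Rio's covariance inequality) replaces each unit's shifted mixing sequence by independent blocks of length $\Theta(\card{\Mmc}_{\max})$, at a total-variation cost controlled by $\sum_{k} \alpha_{\text{mixing}}(k)\leq M$; the block length must at least match $\card{\Mmc}_{\max}$ so that within-block contributions to $S(U_\pi)$ do not overlap across adjacent shifts. The resulting $\Theta(T/\card{\Mmc}_{\max})$ effectively independent replicates yield a pointwise deviation of order $\sqrt{\card{\Mmc}_{\max}/T}$, with a multiplicative constant depending on $M$ through the coupling cost.

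To upgrade this pointwise bound to uniformity in $x$, I would use the density boundedness from Assumption~\ref{ass:test_statistic} to construct a grid of $O(T)$ points on which $F$ oscillates by at most $1/T$ between neighbors, and apply monotonicity of both $\tilde F$ and $F$ together with a union bound over the grid; this step contributes the $\log T$ factor. Combining these pieces yields
\[
E\Bigl[\sup_{x\in\Rbb}|\tilde F(x)-F(x)|\Bigr]\leq c(M)\,\frac{\sqrt{\card{\Mmc}_{\max}}\,\log T}{\sqrt T},
\]
and Markov's inequality closes the argument exactly as in Lemma~\ref{lem:ergodicity_iid}.

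The main obstacle is making the coupling rigorous in the multi-unit setting. Unlike a single cyclic average, permutations in $\Pi$ combine $N$ independent per-unit shifts with the shared, mixing-dependent time series within each unit, so each unit's shifted sequence must be coupled separately and the errors aggregated across units. The trade-off in choosing the coupling block length---long enough for the mixing bound to be useful, short enough to retain $\Omega(T/\card{\Mmc}_{\max})$ nearly independent replicates---is precisely what produces the $\sqrt{\card{\Mmc}_{\max}}$ factor: a looser coupling would inflate the variance of the empirical distribution, while a tighter one would let the total-variation error through $M$ dominate.
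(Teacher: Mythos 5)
Your core move is the right one: the only thing that breaks relative to Lemma~\ref{lem:ergodicity_iid} is the Dvoretzky--Kiefer--Wolfowitz step, and it must be replaced by a uniform empirical-CDF bound valid under strong mixing. Where you and the paper diverge is in how that bound is obtained. The paper first collapses the whole problem to a single scalar sequence: it sets $s_t=\sum_{(i,t')\in\Mmc}\card{U_{i,\floor{t+t'}_T}}$, observes that $\{S(U_\pi):\pi\in\Pi\}=\{s_t:1\leq t\leq T\}$, bounds the mixing coefficients of $\{s_t\}$ by $\tilde{\alpha}_{\text{mixing}}(k)\leq\card{\Mmc}_{\max}\alpha_{\text{mixing}}(k)$, and then invokes Proposition~7.1 of Rio (2017) off the shelf to get $E[\sup_x\card{\check{F}(x)-F(x)}^2]\leq\card{\Pi}^{-1}(1+4\card{\Mmc}_{\max}M)(3+\log\card{\Pi}/(2\log 2))^2$, from which Liapunov and Markov finish the argument. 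You instead rebuild that inequality by hand via Berbee coupling, independent blocks of length $\Theta(\card{\Mmc}_{\max})$, and a grid-plus-monotonicity discretization in $x$; this is essentially the machinery underlying Rio's result, so it lands on the same $\sqrt{\card{\Mmc}_{\max}}\log T/\sqrt{T}$ rate, at the cost of considerably more work and of having to justify the coupling and the block-length trade-off rigorously. The paper's route buys brevity and a clean constant; yours buys self-containedness.

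Two points need correcting. First, your ``main obstacle'' paragraph rests on a misreading of $\Pi$: in Definition~\ref{def:permutations} the moving-block shift $t'$ is common to all units, so $\card{\Pi}=T$ (as the paper's proof uses explicitly), not $T^N$ independent per-unit shifts. There is therefore no multi-unit coupling to aggregate; the cross-sectional independence in Assumption~\ref{ass:identifying condictions} enters only through the bound $\tilde{\alpha}_{\text{mixing}}(k)\leq\card{\Mmc}_{\max}\alpha_{\text{mixing}}(k)$ for the scalar sequence $\{s_t\}$. Second, your claim that stationarity gives $E[\tilde{F}(x)]=F(x)$ for every $\pi$ glosses over the wrap-around at the boundary: the paper notes that $\{s_t\}$ is stationary only for $t\in\{1,\dots,T-1\}$, works with $\check{F}$ summing over those shifts, and pays an explicit $2/T$ to pass from $\check{F}$ back to $\tilde{F}$. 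Your sketch should either handle the identity shift separately in the same way or argue why it is negligible.
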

\begin{proof}
For $\pi \in \Pi$, we define $$s_t=\sum_{(i,t')\in\Mmc}\card{U_{i,\floor{t+t'}_T}}$$ and $$\check{F}(x)=\card{\Pi}^{-1}\sum\nolimits_{1\leq t\leq (T-1)} \Ibf\{s_t \leq x \}.$$
It is straightforward to verify that $\{S(U_{\pi}):\pi\in \Pi \}=\{s_t: 1\leq t\leq T\}$. By Assumption~\ref{ass:regularity of stochastic shock}.2, $\{s_t\}$ is stationary for $t\in \{1,\dots,T-1\}$ but not for $t=T$ and the bounded pdf of $s(U)$ implies the continuity of $F(\cdot)$. Let $\tilde{\alpha}_{\text{mixing}}$ be the strong-mixing coefficients for $\{s_t\}_{1\leq t\leq T}$. Then by Proposition 7.1 of \cite{rio2017asymptotic} it follows that $$E\left[ \sup_{x\in\Rbb} \card{\check{F}(x)-F(x)}^2\right]\leq \frac{1}{\card{\Pi}}\left(1+4\sum_{k=0}^{\card{\Pi}-1}\tilde{\alpha}_{\text{mixing}}(k)\right)\left(3+\frac{\log \card{\Pi}}{2\log 2}\right)^2.$$ Notice that $\tilde{\alpha}_\text{mixing}(k)\leq \card{\Mmc}_{\max}\alpha_{\text{mixing}}(k)$ and hence $\sum_{k=0}^{\card{\Pi}-1}\tilde{\alpha}_{\text{mixing}}(k)\leq \card{\Mmc}\cdot M$. Since by Definition~\ref{def:permutations} the number of possible permutation is $\card{\Pi}=T$, it follows that
$$E\left[ \sup_{x\in\Rbb} \card{\check{F}(x)-F(x)}^2\right]\leq \frac{1}{T}\left(1+4\card{\Mmc}_{\max}\cdot M\right)\left(3+\frac{\log T}{2\log 2}\right)^2.$$
By Liapunov's inequality\begin{align}
   E\left[ \sup_{x\in\Rbb} \card{\check{F}(x)-F(x)}\right]&\leq \sqrt{E\left[ \sup_{x\in\Rbb} \card{\check{F}(x)-F(x)}^2\right] }\nonumber\\
   &\leq \sqrt{\frac{1+4\card{\Mmc}_{\max}\cdot M}{T}}\left(3+\frac{\log T}{2\log 2}\right)\label{eq:EMB_1}
\end{align}
Since $T\tilde{F}(x)-(T-1)\check{F}(x)=\Ibf\{s_0\leq x\}$, it follows that \begin{align}
    \sup_{x\in\Rbb} \card{\tilde{F}(x)-\check{F}(x)}&=\sup_{x\in\Rbb} \card{\left(\frac{T-1}{T}\check{F}(x)+\frac{1}{T}\Ibf\{s_0\leq x\}\right)-\check{F}(x)} \nonumber \\
    &=\sup_{x\in\Rbb} \card{\frac{1}{T}\left(\check{F}(x)+\Ibf\{s_0\leq x\}\right)-\check{F}(x)} \nonumber \\
    &\leq \frac{2}{T},\label{eq:EMB_2} 
\end{align}
as both terms, $\check{F}(x)$ and $\Ibf\{s_0\leq x\}$, are bounded by $1$. Combining \eqref{eq:EMB_1} and \eqref{eq:EMB_2}, we obtain that $$ E\left[ \sup_{x\in\Rbb} \card{\tilde{F}(x)-F(x)}\right]\leq \sqrt{\frac{1+4\card{\Mmc}_{\max}\cdot M}{T}}\left(3+\frac{\log T}{2\log 2}\right) + \frac{2}{T}.
$$ 
The desired result follows by Markov's inequality.
\end{proof}

\begin{lemma}[Estimation error condition (A) for iid permutations]\label{lem:estimation_iid}
    Let $\Mmc$ be fixed. Assume that there exists a constant $Q>0$ such that $\norm{S(U)-S(V)}\leq Q\norm{\Wbf\circ(U-V)}_2$ for any $U,V\in \Rbb^{N\times T}$. Suppose further that Assumption \ref{ass:consistency} holds, then conditions (A.1) and (A.2) are satisfied.
\end{lemma}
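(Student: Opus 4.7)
The plan is to exploit the given Lipschitz-type inequality on $S$ to reduce both (A.1) and (A.2) directly to the consistency bounds in Assumption~\ref{ass:consistency}. The key observation is that under the null hypothesis enforced for the estimator~\eqref{eq:Estimator}, $\Ybf \equiv \Ybf(0)$ on the full unit-time space, so pointwise $\hat{U} - U = \Yoe - \Yhoe$, which is precisely the object controlled by the consistency assumption.

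For condition (A.2), I apply the Lipschitz inequality to $\hat{U}$ and $U$ to obtain
$$\card{S(\hat{U}) - S(U)} \;\leq\; Q\,\norm{\Wbf \circ (\hat{U} - U)}_2 \;=\; Q\Bigl(\sum\nolimits_{(i,t)\in\Mmc}(\Yhoe_{it} - \Yoe_{it})^2\Bigr)^{1/2}.$$
Invoking the pointwise error bound over the treated cells supplied by Lemma~\ref{lem:consistency} (which is the relevant form of Assumption~\ref{ass:consistency}.2 here, since enforcing the null allows estimation on the entire panel) yields $\card{S(\hat{U}) - S(U)} \leq Q\sqrt{\card{\Mmc}}\,\delta_{N,T}$ with probability at least $1 - \gamma_{N,T}$, which settles (A.2) on taking $\delta_{2n}$ of this order.

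For condition (A.1), squaring the Lipschitz inequality for each $\pi \in \Pi$ and averaging gives
$$n^{-1}\sum_{\pi\in\Pi}\bigl(S(\hat{U}_\pi) - S(U_\pi)\bigr)^2 \;\leq\; Q^2 \sum_{(i,t)\in\Mmc} n^{-1}\sum_{\pi\in\Pi}\bigl(\hat{U}_{\pi(i,t)} - U_{\pi(i,t)}\bigr)^2.$$
The essential ingredient is a symmetry argument: since in the iid case $\Pi$ is the full set of bijections on $\{1,\dots,N\}\times\{1,\dots,T\}$, for each fixed $(i,t)$ the map $\pi\mapsto\pi(i,t)$ is uniform on $\{1,\dots,N\}\times\{1,\dots,T\}$. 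Consequently the inner $\pi$-average collapses to $(NT)^{-1}\norm{\hat{U}-U}_2^2 = (NT)^{-1}\norm{\Yhoe - \Yoe}_2^2$, which Assumption~\ref{ass:consistency}.1 bounds by $\delta_{N,T}^2$ with probability at least $1 - \gamma_{N,T}$. Summing over the $\card{\Mmc}$ treated cells yields the desired bound $Q^2\card{\Mmc}\,\delta_{N,T}^2$.

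The main conceptual step is the uniform-averaging identity underlying (A.1); once this is noted, both (A.1) and (A.2) are direct consequences of the mean-squared and pointwise guarantees of Assumption~\ref{ass:consistency}, combined on the same high-probability event. The only mildly delicate point is recognising that the $\Omc$-indexed pointwise bound in Assumption~\ref{ass:consistency}.2 must be applied in its $\Mmc$-indexed form here, which is exactly what Lemma~\ref{lem:consistency} delivers when the null hypothesis is imposed.
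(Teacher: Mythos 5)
Your proposal is correct and follows essentially the same route as the paper's proof: the same Lipschitz reduction for (A.2) via the restriction of $S$ to $\Mmc$, and for (A.1) the same counting argument over the full permutation group (your uniformity observation for $\pi\mapsto\pi(i,t)$ is exactly the paper's partition of $\Pi$ into the sets $A_{(i,t),(j,s)}$ of cardinality $(NT-1)!$), followed by the mean-squared bound from Assumption~\ref{ass:consistency}.1. The only cosmetic difference is that you make the $\Omc$-versus-$\Mmc$ indexing of the pointwise bound explicit, which the paper leaves implicit.
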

\begin{proof}
    For $(i,t),(j,s)\in \{1,\dots,N\}\times\{1,\dots,T\}$, we define $A_{(i,t),(j,s)}=\{\pi\in\Pi : \pi(i,t)=(j,s)\}$. Recall that under Assumption~\ref{ass:regularity of stochastic shock}.1 $\Pi$ is the set of all bijections on  $ \{1,\dots,N\}\times\{1,\dots,T\}$. Thus, $\card{A_{(i,t),(j,s)}}=(NT-1)!$. It follows that for any $(i,t)\in \{1,\dots,N\}\times\{1,\dots,T\}$
    \begin{align}
        \sum_{\pi \in \Pi} \left(\hat{U}_{\pi(i,t)}-U_{\pi(i,t)}\right)^2& = \sum_{(j,s)}\sum_{\pi \in A_{(i,t),(j,s)} } \left(\hat{U}_{\pi(i,t)}-U_{\pi(i,t)}\right)^2\nonumber\\
        &=\sum_{(j,s)}\sum_{\pi \in A_{(i,t),(j,s)} } \left(\hat{U}_{(j,s)}-U_{(j,s)}\right)^2\nonumber\\
        &=\card{ A_{(i,t),(j,s)}}\sum_{(j,s)}\left(\hat{U}_{(j,s)}-U_{(j,s)}\right)^2\nonumber\\
        &=(NT-1)!\norm{\hat{U}-U}_2^2\label{eq:norm_U}
    \end{align}
    By assumption, we have
    \begin{align*}
    \frac{1}{\card{\Pi}}\sum_{\pi\in\Pi}\left(S(\hat{U}_\pi)-S(U_\pi)\right)^2 &\leq \frac{Q}{\card{\Pi}}\sum_{\pi\in\Pi}\norm{\Wbf \circ (\hat{U}_\pi - U_\pi)}_2^2 \\
    &\leq \frac{Q}{\card{\Pi}}\sum_{\pi\in\Pi}\sum_{(i,t)\in\Mmc}\left(\hat{U}_{\pi(i,t)}-U_{\pi(i,t)}\right)^2 \\
    &\leq \frac{Q}{\card{\Pi}}\sum_{(i,t)\in\Mmc}\sum_{\pi\in\Pi}\left(\hat{U}_{\pi(i,t)}-U_{\pi(i,t)}\right)^2 
    \end{align*}
    Using $\card{\Pi}=(NT)!$ and applying \eqref{eq:norm_U} it follows that \begin{align*}
        \frac{1}{\card{\Pi}}\sum_{\pi\in\Pi}\left(S(\hat{U}_\pi)-S(U_\pi)\right)^2 &\leq \frac{Q}{(NT)!}\sum_{(i,t)\in\Mmc}(NT-1)!\norm{\hat{U}-U}_2^2 \\
        &\leq \frac{Q}{(NT)!} \card{\Mmc}(NT-1)!\norm{\hat{U}-U}_2^2 \\
        &\leq \frac{Q \card{\Mmc}}{NT}\norm{\hat{U}-U}_2^2
    \end{align*}
    Since $\card{\Mmc}$ is fixed, condition (A.1) holds by Assumption~\ref{ass:consistency}.1.\par
    Recall that $S(U)$ only depends on the elements of $U$ in the set $\Mmc$. Let $U_\Mmc=\{U_{it} :  (i,t) \in\Mmc$. The Lipschitz property of $S(\cdot)$ implies that 
    \begin{align*}
        \card{S(\hat{U})-S(U)}&= \card{S(\hat{U}_\Mmc)-S(U_\Mmc)}\\
        &\leq k\card{\hat{U}_\Mmc-U_\Mmc} 
    \end{align*}
    Condition (A.2) follows by Assumption~\ref{ass:consistency}.2. The proof is complete.
\end{proof}

\begin{lemma}[Estimation error condition (E) for moving block permutations]\label{lem:estimation_block}
     Let $\Mmc$ be fixed and suppose $\card{\Mmc}_{\max}<T$. Assume that there exists a constant $Q>0$ such that $\norm{S(U)-S(V)}\leq Q\norm{\Wbf\circ(U-V)}_2$ for any $U,V\in \Rbb^{N\times T}$. Suppose further that Assumption \ref{ass:consistency} holds, then conditions (A.1) and (A.2) are satisfied.
\end{lemma}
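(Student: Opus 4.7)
The plan is to closely follow the proof of Lemma~\ref{lem:estimation_iid} (the iid case) and only substitute the combinatorial averaging to match the moving-block structure. Since condition (A.2) involves only the identity permutation, it does not reference $\Pi$ at all and its proof carries over verbatim: apply the Lipschitz hypothesis to get $\card{S(\hat{U})-S(U)}\leq Q\norm{\Wbf\circ(\hat{U}-U)}_2 = Q\sqrt{\sum_{(i,t)\in\Mmc}(\hat{U}_{it}-U_{it})^2}$, then invoke the pointwise component of Assumption~\ref{ass:consistency} at each of the $\card{\Mmc}$ cells in $\Mmc$ to conclude $\card{S(\hat{U})-S(U)}\leq Q\sqrt{\card{\Mmc}}\,\delta_{N,T}$, which yields (A.2) with $\delta_{2n}=Q\sqrt{\card{\Mmc}}\,\delta_{N,T}$.

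For condition (A.1), I would first apply the Lipschitz bound to each permuted term:
\[ (S(\hat{U}_\pi)-S(U_\pi))^2 \leq Q^2\norm{\Wbf\circ(\hat{U}_\pi-U_\pi)}_2^2 = Q^2\sum_{(i,t)\in\Mmc}(\hat{U}_{i,\pi(t)}-U_{i,\pi(t)})^2. \]
The key combinatorial observation is that, by Definition~\ref{def:permutations}, each $\pi\in\Pi$ is indexed by a single shift $t'\in\{0,\dots,T-1\}$ acting as $\pi(t)=\floor{t+t'}_T$; hence for any fixed $(i,t)$, as $t'$ cycles through its $T$ values, $\pi(t)$ traverses every position in $\{1,\dots,T\}$ exactly once. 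Swapping the order of summation gives
\[ \frac{1}{\card{\Pi}}\sum_{\pi\in\Pi}\sum_{(i,t)\in\Mmc}(\hat{U}_{i,\pi(t)}-U_{i,\pi(t)})^2 = \frac{1}{T}\sum_{(i,t)\in\Mmc}\sum_{s=1}^T(\hat{U}_{i,s}-U_{i,s})^2. \]
Grouping the outer sum by unit and setting $n_i=\card{\{t:(i,t)\in\Mmc\}}\leq\card{\Mmc}_{\max}$, this is at most $(\card{\Mmc}_{\max}/T)\,\norm{\hat{U}-U}_2^2$; combining the within-row summation with the pointwise estimation error $(\hat{U}_{i,s}-U_{i,s})^2\leq\delta_{N,T}^2$ from Assumption~\ref{ass:consistency}.2 to write $\sum_{s=1}^T(\hat{U}_{i,s}-U_{i,s})^2\leq T\delta_{N,T}^2$ yields the final bound $Q^2\card{\Mmc}\delta_{N,T}^2$, establishing (A.1) with $\delta_{2n}=Q\sqrt{\card{\Mmc}}\,\delta_{N,T}$.

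The main obstacle is that moving-block permutations only shuffle within each row, so the averaging effect that collapsed the MSE to $O(\delta_{N,T}^2)$ in the iid proof (where every cell is hit equally across all $NT$ positions) is much weaker here: the naive route of invoking only the mean-squared component of Assumption~\ref{ass:consistency} together with the identity $\norm{\hat{U}-U}_2^2\leq NT\delta_{N,T}^2$ leaves an extraneous factor of $N$ in the bound. Obtaining a constant depending only on $\card{\Mmc}$ --- as required by the conclusion of Theorem~\ref{th:pvalue} --- forces one to use the pointwise part of Assumption~\ref{ass:consistency} in tandem with the row-wise cycling. This trade-off between the two components of the consistency assumption is the only substantive departure from the iid argument.
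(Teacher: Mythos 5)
Your proof is correct and follows the same route as the paper's: the Lipschitz hypothesis reduces $(S(\hat{U}_\pi)-S(U_\pi))^2$ to $\sum_{(i,t)\in\Mmc}(\hat{U}_{i,\pi(t)}-U_{i,\pi(t)})^2$, the order of summation is swapped, and the key combinatorial fact --- that as the shift $t'$ ranges over its $T$ values, $\pi(t)$ visits each time index exactly once --- collapses the inner sum to a row norm; condition (A.2) is handled identically via the identity permutation and the pointwise part of Assumption~\ref{ass:consistency}. Where you depart from the paper is in how you close (A.1): the paper stops at the bound $\frac{Q}{T}\card{\Mmc}_{\max}\norm{\hat{U}-U}_2^2$ and simply cites Assumption~\ref{ass:consistency}.1, which, after substituting $\norm{\hat{U}-U}_2^2\leq NT\delta_{N,T}^2$, still carries a factor of $N$ that the paper does not explicitly dispose of. You correctly identify this as the substantive obstacle and resolve it by bounding each row sum as $\sum_{s=1}^T(\hat{U}_{i,s}-U_{i,s})^2\leq T\delta_{N,T}^2$ via the pointwise component of the assumption, arriving at $Q^2\card{\Mmc}\delta_{N,T}^2$, which depends only on $\card{\Mmc}$ as Theorem~\ref{th:pvalue} requires. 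This is a genuine tightening of the paper's argument. One small caveat: Assumption~\ref{ass:consistency}.2 as written states the pointwise bound only for $(i,t)\in\Omc$, while your row-wise application needs it at every cell $(i,s)$, $s=1,\dots,T$, including cells in $\Mmc$; this is harmless under the imposed null (and Lemma~\ref{lem:consistency} supplies pointwise control on $\Mmc$), but you should say explicitly that you are using the pointwise bound on all of $\{1,\dots,N\}\times\{1,\dots,T\}$ rather than on $\Omc$ alone.
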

\begin{proof}
    Notice that for moving block permutations \begin{equation}\sum_{(i,t)\in\Mmc}\sum_{\pi\in\Pi} \left(\hat{U}_{i,\pi(t)}-U_{i,\pi(t)}\right)^2\leq\card{\Mmc}_\text{max}\norm{\hat{U}-U}_2^2,\label{eq:norm_U_perm}\end{equation} since the permutations shift the residuals once over the full time interval. \par
    By assumption, we have 
    \begin{align}
         \frac{1}{\card{\Pi}}\sum_{\pi\in\Pi}\left(S(\hat{U}_\pi)-S(U_\pi)\right)^2 &\leq \frac{Q}{\card{\Pi}}\sum_{\pi\in\Pi}\norm{\Wbf \circ (\hat{U}_\pi - U_\pi)}_2^2 \nonumber \\
         &\leq \frac{Q}{\card{\Pi}}\sum_{\pi\in\Pi}\sum_{(i,t)\in\Mmc} \left(\hat{U}_{i,\pi(t)}-U_{i,\pi(t)}\right)^2 \nonumber \\
         &\leq \frac{Q}{\card{\Pi}}\sum_{(i,t)\in\Mmc}\sum_{\pi\in\Pi} \left(\hat{U}_{i,\pi(t)}-U_{i,\pi(t)}\right)^2 \\
    \end{align}
    Using $\card{\Pi}=T$ and applying \eqref{eq:norm_U_perm} it follows that \begin{equation*}
        \frac{1}{\card{\Pi}}\sum_{\pi\in\Pi}\left(S(\hat{U}_\pi)-S(U_\pi)\right)^2 \leq \frac{Q}{T}\card{\Mmc}_\text{max}\norm{\hat{U}-U}_2^2
    \end{equation*}
    Since $\card{\Mmc}$ is fixed and $\card{\Mmc}_\text{max}<T$, condition (A.1) holds by Assumption~\ref{ass:consistency}.1.\par
    Recall that $S(U)$ only depends on the elements of $U$ in the set $\Mmc$. Let $U_\Mmc=\{U_{it} :  (i,t) \in\Mmc$. The Lipschitz property of $S(\cdot)$ implies that 
    \begin{align*}
        \card{S(\hat{U})-S(U)}&= \card{S(\hat{U}_\Mmc)-S(U_\Mmc)}\\
        &\leq k\card{\hat{U}_\Mmc-U_\Mmc} 
    \end{align*}
    Condition (A.2) follows by Assumption~\ref{ass:consistency}.2. The proof is complete.
\end{proof}

\subsection{Proof of Lemma~\ref{lem:consistency}} Assume that
\begin{itemize}[itemsep=-4pt]
    \item[(L\ref{lem:consistency}.1)] Assumption~\ref{ass:identifying condictions} holds,
    \item[(L\ref{lem:consistency}.2)] $\norm{\Lbf}_*\leq \Lambda_L, \norm{\Hbf}_{1,e}\leq \Lambda_H, \norm{\mathbf{\beta}}_{1,e}\leq \Lambda_\beta$,
    \item[(L\ref{lem:consistency}.3)] $\exists \kappa_1 \: : \: \max_{1\leq j \leq N } T^{-1} \sum\nolimits_{t=1}^T E\left[\card{u_{it}}^{2\kappa_1} \vert \mathcal{Z}\right] = O_P(1)$,
    \item[(L\ref{lem:consistency}.4)] $\norm{N^{-1} \sum\nolimits_{i=1}^N E\left[U_iU_i^\top\vert \mathcal{Z}\right]}=O_P(1)$,
    \item[] there exists a sequence $\nu_{N,T}\geq 0$ such that
    \item[(L\ref{lem:consistency}.5)] $\nu_{N,T} (NT)^{-1}\left(\Lambda_L \sqrt{N \vee (N^{1/\kappa_1}T\log(N)}+PQ\Lambda_Hc_H + J\Lambda_\beta c_\beta\right)=o(1)$
    \item[(L\ref{lem:consistency}.6)] $(NT)^{-1} \sum\nolimits_{i=1}^N \sum\nolimits_{t=1}^T (\Yhoe_{it}-\Yoe_{it})^2\leq \nu_{N,T} (NT)^{-1} \sum\nolimits_{i=1}^N \sum\nolimits_{t=1}^T (\Yhoe_{it}-\Yoe_{it})^2$
    \item[(L\ref{lem:consistency}.7)] $\forall_{(j,s)\in \Mmc}\: : \: \card{\Yhoe_{js}-\Yoe_{js}} \leq \nu_{N,T} (NT)^{-1} \sum\nolimits_{i=1}^N \sum\nolimits_{t=1}^T (\Yhoe_{it}-\Yoe_{it})^2$
\end{itemize}
Define $\Xi\coloneqq \Yhoe-\Yoe$ and $Q_{it} \in \Rbb^{N\!\times\! T} \: : \: [\Qbf_{it}]_{js}=\Ibf\{(j,s)=(i,t)\}$, i.e. the matrix of zeros except for the $(i,t)$ element is 1. We thus can rewrite the model as \begin{equation}\label{eq:trace}
    \Ybf_{it}(0)=\text{trace}(\Qbf_{it}^\top\Yoe)+\Ubf_{it}
\end{equation}
Note that by definition the estimator~\eqref{eq:Estimator} satisfies $$ \sum_{i=1}^N \sum_{t=1}^T \left(\Ybf_{it}(0)-\text{trace}(\Qbf_{it}^\top\Yhoe)\right)^2 \leq \sum_{i=1}^N \sum_{t=1}^T \left(\Ybf_{it}(0)-\text{trace}(\Qbf_{it}^\top\Yoe)\right)^2.$$ 
We plug-in \eqref{eq:trace} and reformulate to obtain
\begin{align}
    \sum_{i=1}^N \sum_{t=1}^T \left(\text{trace}(\Qbf_{it}^\top\Yoe)+\Ubf_{it}-\text{trace}(\Qbf_{it}^\top\Yhoe)\right)^2 &\leq \sum_{i=1}^N \sum_{t=1}^T \left(\Ubf_{it}\right)^2 \nonumber \\
    \sum_{i=1}^N \sum_{t=1}^T \left(\Ubf_{it}-\text{trace}(\Qbf_{it}^\top\Xi)\right)^2 &\leq \sum_{i=1}^N \sum_{t=1}^T \left(\Ubf_{it}\right)^2 \nonumber \\
    \sum_{i=1}^N \sum_{t=1}^T \left(\text{trace}(\Qbf_{it}^\top\Xi)\right)^2 &\leq 2\sum_{i=1}^N \sum_{t=1}^T \Ubf_{it}\text{trace}(\Qbf_{it}^\top\Xi)\nonumber \\
    \sum_{i=1}^N \sum_{t=1}^T \left(\text{trace}(\Qbf_{it}^\top\Xi)\right)^2 &\leq 2\,\text{trace}\left(\left[\sum_{i=1}^N \sum_{t=1}^T \Ubf_{it}\Qbf_{it}\right]^\top\Xi\right)\nonumber \\
    \sum_{i=1}^N \sum_{t=1}^T \left(\text{trace}(\Qbf_{it}^\top\Xi)\right)^2 &\leq 2\,\text{trace}\left(\Ubf^\top\Xi\right)
\end{align}
Usign $\Xi=  (\hat{\Lbf}-\Lbf)+\left[\XZbar_{it}(\hat{\Hbar}-\Hbar)\right]_{it} + \left[\Vbf_{it}^\top(\hat{\bm{\beta}}-\bm{\beta})\right]_{it}$ and applying Lemma~\ref{lem:trace}, it follows that 
\begin{align}
     \text{trace}\left(\Ubf^\top\Xi\right) &= \text{trace}\left(\Ubf^\top(\hat{\Lbf}-\Lbf)\right)+\text{trace}\left(\Ubf^\top\left[\XZbar_{it}(\hat{\Hbar}-\Hbar)\right]_{it}\right)+\text{trace}\left(\Ubf^\top\left[\Vbf_{it}^\top(\hat{\bm{\beta}}-\bm{\beta})\right]_{it}\right)  \nonumber\\
     &\overset{\text{(i)}}\leq \norm{u} \cdot \norm{\hat{\Lbf}-\Lbf}_*+ 2PQ\Lambda_H\delta_H + 2J\Lambda_\beta\delta_\beta  \nonumber\\
     &\leq 2\left(\lambda_L \norm{u} + PQ\Lambda_H\delta_H + J\Lambda_\beta\delta_\beta\right)  
\end{align}
where the first term in (i) follows from the trace duality property. To bound $\norm{u}$, we apply Lemma H.10 of \cite{chernozhukov2021exact} and get
\begin{align}
     \sum_{i=1}^N \sum_{t=1}^T \left(\text{trace}(\Qbf_{it}^\top\Xi)\right)^2&\leq 4\left(\lambda_L \norm{u} + PQ\Lambda_H\delta_H + J\Lambda_\beta\delta_\beta\right)   \nonumber\\
     \sum_{i=1}^N \sum_{t=1}^T \left(\text{trace}(\Qbf_{it}^\top\Xi)\right)^2 &=O\left(\Lambda_L \sqrt{N \vee (N^{1/\kappa_1}T\log(N)}+PQ\Lambda_Hc_H + J\Lambda_\beta c_\beta\right)  \nonumber\\
     (NT)^{-1}\sum_{i=1}^N \sum_{t=1}^T \left(\text{trace}(\Qbf_{it}^\top\Xi)\right)^2 &=O\left((NT)^{-1} \vphantom{\sqrt{N \vee (N^{1/\kappa_1}T\log(N)}}\right. \nonumber \\
             &\phantom{=}\; \left. \left(\Lambda_L \sqrt{N \vee (N^{1/\kappa_1}T\log(N)}+PQ\Lambda_Hc_H + J\Lambda_\beta c_\beta\right) \right) \label{eq:lemma_O}
\end{align}
The first result in Lemma~\ref{lem:consistency} follows from applying Assumptions~(L\ref{lem:consistency}.5) and~(L\ref{lem:consistency}.6) to~\eqref{eq:lemma_O}, the second result by applying Assumptions~(L\ref{lem:consistency}.5) and~(L\ref{lem:consistency}.7).
\begin{lemma}\label{lem:trace}
    Assume that there exist two constants $\delta_H, \delta_\beta$ such that $\max_{1\leq h\leq PQ}\sum_{i=1}^N\sum_{t=1}^T u_{it}\XZbar_{it,h}\leq \delta_H$ and $\max_{1\leq b\leq J}\sum_{i=1}^N\sum_{t=1}^T u_{it}V_{it,b}\leq \delta_H$. Then it holds that $\text{trace}\left(\Ubf^\top\left[\XZbar_{it}(\hat{\Hbar}-\Hbar)\right]_{it}\right)\leq 2PQ\delta_H\Lambda_H$ and $\text{trace}\left(\Ubf^\top\left[V_{it}(\hat{\mathbf{\beta}}-\mathbf{\beta})\right]_{it}\right)\leq 2J\delta_\beta\Lambda_\beta$
\end{lemma}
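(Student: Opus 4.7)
The plan is to expand the trace as an explicit double sum over observation indices, swap the order of summation so the covariate coefficients can be pulled out, and then combine the maximum-of-covariate-weighted-residual hypothesis with the elementwise $\ell_1$ bound on $\Hbf$ (respectively $\bm{\beta}$) to close the estimate.

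First I would exploit the block structure: since $\XZbar_{it}$ is a $1\times PQ$ row and $\hat{\Hbar}-\Hbar$ is a $PQ\times 1$ column, each entry of the matrix $[\XZbar_{it}(\hat{\Hbar}-\Hbar)]_{it}$ is the scalar $\sum_{h=1}^{PQ}\XZbar_{it,h}(\hat{\Hbar}_h-\Hbar_h)$. Because $\Ubf^\top$ multiplied against such a matrix produces (under the trace) simply the elementwise inner product, one has
\begin{equation*}
\text{trace}\!\left(\Ubf^\top[\XZbar_{it}(\hat{\Hbar}-\Hbar)]_{it}\right)=\sum_{i=1}^N\sum_{t=1}^T U_{it}\sum_{h=1}^{PQ}\XZbar_{it,h}(\hat{\Hbar}_h-\Hbar_h).
\end{equation*}

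Second, I would swap the order of summation, pulling the coefficient difference out of the $(i,t)$ sum, and then bound in absolute value:
\begin{equation*}
\left|\sum_{h=1}^{PQ}(\hat{\Hbar}_h-\Hbar_h)\sum_{i,t}U_{it}\XZbar_{it,h}\right|\le \Big(\max_{1\le h\le PQ}\Big|\sum_{i,t}U_{it}\XZbar_{it,h}\Big|\Big)\sum_{h=1}^{PQ}|\hat{\Hbar}_h-\Hbar_h|\le \delta_H\,\|\hat{\Hbar}-\Hbar\|_{1,e}.
\end{equation*}
The first factor is controlled by the hypothesis of the lemma, and the triangle inequality together with Assumption~(L\ref{lem:consistency}.2) gives $\|\hat{\Hbar}-\Hbar\|_{1,e}\le \|\hat{\Hbar}\|_{1,e}+\|\Hbar\|_{1,e}\le 2\Lambda_H$. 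Combining these yields the bound $2\delta_H\Lambda_H$, which immediately implies the claimed $2PQ\,\delta_H\Lambda_H$ since $PQ\ge 1$ (the stated bound is in fact loose by a factor of $PQ$, but this is harmless for the downstream application in the proof of Lemma~\ref{lem:consistency}).

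The argument for $\bm{\beta}$ is entirely parallel: replace $\XZbar_{it,h}$ by $V_{it,b}$, replace $\hat{\Hbar}-\Hbar$ by $\hat{\bm{\beta}}-\bm{\beta}$, invoke the second maximum-hypothesis giving $\delta_\beta$, and use $\|\hat{\bm{\beta}}-\bm{\beta}\|_{1,e}\le 2\Lambda_\beta$ from Assumption~(L\ref{lem:consistency}.2); the covariate index now ranges over $\{1,\dots,J\}$, which produces the factor $J$. There is no real obstacle here: the only subtlety is recognizing the trace as an elementwise inner product and then applying a Hölder-type inequality with the $\ell_\infty/\ell_1$ duality over the covariate index, so the proof is essentially a rearrangement plus two applications of the triangle inequality.
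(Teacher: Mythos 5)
Your proof is correct and takes essentially the same route as the paper's: expand the trace as a triple sum over $(i,t,h)$, swap the order of summation, and close with an $\ell_\infty/\ell_1$ H\"older bound over the covariate index combined with the triangle inequality and the $\ell_1$ constraints on $\Hbar$ and $\hat{\Hbar}$ (respectively $\bm{\beta}$ and $\hat{\bm{\beta}}$). If anything, your version is slightly more careful than the paper's, since you insert the absolute values needed for the H\"older step and correctly note that the argument actually delivers $2\delta_H\Lambda_H$, making the stated factor $PQ$ harmless slack.
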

\begin{proof}
Recall that $\sum_{h=1}^{PQ}\Hbar_h\leq\Lambda_H$. Thus
\begin{align*}
    \text{trace}\left(\Ubf^\top\left[\XZbar_{it}(\hat{\Hbar}-\Hbar)\right]_{it}\right) &= \sum_{i=1}^N\sum_{t=1}^T\sum_{h=1}^{PQ}\varepsilon_{it}\XZbar_{it,h}(\hat{\Hbar}_h-\Hbar_h)\\
    &= \sum_{h=1}^{PQ}(\hat{\Hbar}_h-\Hbar_h)\sum_{i=1}^N\sum_{t=1}^T\varepsilon_{it}\XZbar_{it,h}\\
    &\leq \delta_H\sum_{h=1}^{PQ}(\hat{\Hbar}_h-\Hbar_h)\\
    &\leq 2PQ\delta_H\Lambda_H.
\end{align*}
The proof of $\text{trace}\left(\Ubf^\top\left[V_{it}(\hat{\mathbf{\beta}}-\mathbf{\beta})\right]_{it}\right)\leq 2J\delta_\beta\Lambda_\beta$ is equivalent.
\end{proof}

\section{Formulation of data generating process}
\label{app:Formulation of data generating process}
The data-generating process is defined by the following equations:
\begin{singlespace}
\small{
\begin{align*}
    &\Ybf = \tau \mathbf{W} + \Lbf + \Xbf\Hbf\Zbf + \left[\Vbf_{it}^\top \bm{\beta} \right]_{it} + \mathbf{\Gamma}\Ibf_T^\top + \Ibf_N(\mathbf{\Delta})^\top + \Ubf,\; \Ybf \!\in\!\Rbb^{N\!\times\!T}\\
    &\mathbf{W}_{it} \sim \text{Bernoulli}(w)\\
    & \Lbf = \mathbf{U}(\sigma \cdot\Ibf)\mathbf{V}^\top, \; \sigma_j=\begin{cases}
        \sim \exp(\zeta_L), & j\leq \text{rank}_L\\ 0 & j>\text{rank}_L  \end{cases}, \; (\mathbf{U}, \Sigma, \mathbf{V}^T)=\text{svd}(\Lbf'), \; [\Lbf']_{it}\sim \mathcal{N}(0,1)\\  
    &\Xbf = \left[\eta_1\mathbf{X_1}
    , \cdots, \eta_N\mathbf{X_N}\right]^\top\!\in\! \Rbb^{N\!\times\! p},\; \mathbf{X_i}\sim \left(\mathbf{0},\mathbf{\Sigma_X}\right),\; \left[\mathbf{\Sigma_X}\right]_{ab}=\begin{cases}
        1, & \!a\!=\!b\\ \sim \mathcal{U}(0,\sigma_{\max}), &\! a\!\neq \!b
    \end{cases}, \; \eta_i\sim\mathcal{U}(0,1)\\
    &\Hbf_{it} = \eta_{it}\zeta_{it}, \; \eta_{it}\sim \mathcal{N}(0,h_{\text{size}}),\; \zeta_{it}\sim\text{Bernoulli}(h_\text{prob})\\
    &\Zbf = \left[\eta_1\mathbf{Z_1}
    , \cdots,\eta_T\mathbf{Z_T}\right]\!\in\! \Rbb^{p\!\times\! T},\; \mathbf{Z_t}\sim \left(\mathbf{0},\mathbf{\Sigma_Z}\right),\; \left[\mathbf{\Sigma_Z}\right]_{ab}=\begin{cases}
        1, & \!a\!=\!b\\ \sim \mathcal{U}(0,\sigma_{\max}), &\! a\!\neq \!b
    \end{cases}, \; \eta_t\sim\mathcal{U}(0,1)\\
    &\Vbf_{it} \sim (\mathbf{0},\Ibf), \Vbf_{it}\in \Rbb^B\\
    & \bm{\beta}_b = \eta_{b}\zeta_{b}, \; \eta_{b}\sim \mathcal{N}(0,b_{\text{size}}),\;\zeta_{b}\sim\text{Bernoulli}(b_\text{prob})\\
    &\mathbf{\Gamma}_i \sim \mathcal{N}(0,1)\\
    &\mathbf{\Delta}_t \sim \mathcal{N}(0,1) \\
    &\Ubf_{it} \sim \mathcal{N}(0,\sigma_\text{eps})
\end{align*}
}
\end{singlespace}

If not specified otherwise, the following parameters for the data-generating process are used:
$N=100$, $T=80$, $\tau=1$, $\text{rank}_L=5$, $w=0.1$, $\sigma_{\max}=0.8$, $p=50$, $q=20$, $h_\text{size}=1$, $h_\text{prob}=0.025$, $B=1000$, $b_\text{size}=1$, $b_\text{prob}=0.02$, $\sigma_\text{eps}=1$. The number of folds for the cross-validation is set to 5. To ensure comparability across individual simulation runs, the Bernoulli-distributed random variables can be replaced by draws without replacement and respective ratios for small sample sizes.

\section{Additional simulation results for $\Hbf$ (ASR-$\Hbf$)}
\subsection{ASR-$\Hbf$: Choice of optimal penalty parameter}
\label{app:Choice of optimal penalty parameter}
Figure~\ref{fig:lambda} displays the aggregate level on all coefficients for different values of $\lambda_H$: \ref{fig:lambda_H_norm} shows the element-wise $l_1$ norm of the estimated matrix $\hat{\Hbf}$ and \ref{fig:lambda_H_size} the number of non-zero coefficients. Both aggregation measures coincide with the observed pattern in Figure~\ref{fig:fan_H}. The number of coefficients in the constituted model by specific values of the penalty parameter decreases very fast already for low levels of regularization. In accordance with Figure~\ref{fig:fan_H}, the first coefficients to be eliminated are the ones with a small dedicated parameter estimate, letting the norm of the coefficients decline slower than the number of non-zero coefficients. The last coefficients to be regularized out are the ones with the highest absolute coefficient estimate. \par

Figure~\ref{fig:lambda} further illustrates that the \textit{post}-regularization estimation without the penalty terms in the estimation equation~\eqref{eq:Estimator} and including only the covariates from the selected model by the \textit{regular} matrix completion estimation yields larger coefficient estimates in absolute terms as they are not shrunken by regularization. The values for the post-estimation are not shown in Figure~\ref{fig:lambda_H_size} as its model coincides with the regular estimation by construction. \par

\begin{figure}[h]
    \centering
    \begin{subfigure}[b]{0.9\textwidth} \centering
        \includegraphics[width=0.7\textwidth]{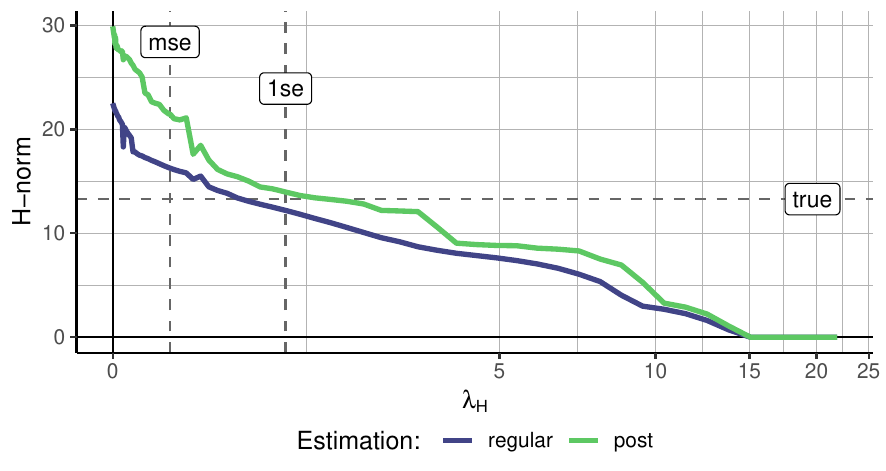}
        \caption{Element-wise $l_1$ norm of $\hat{\Hbf}$}
        \label{fig:lambda_H_norm} 
    \end{subfigure}\vspace{12pt}

    \begin{subfigure}[b]{0.9\textwidth} \centering
        \includegraphics[width=0.7\textwidth]{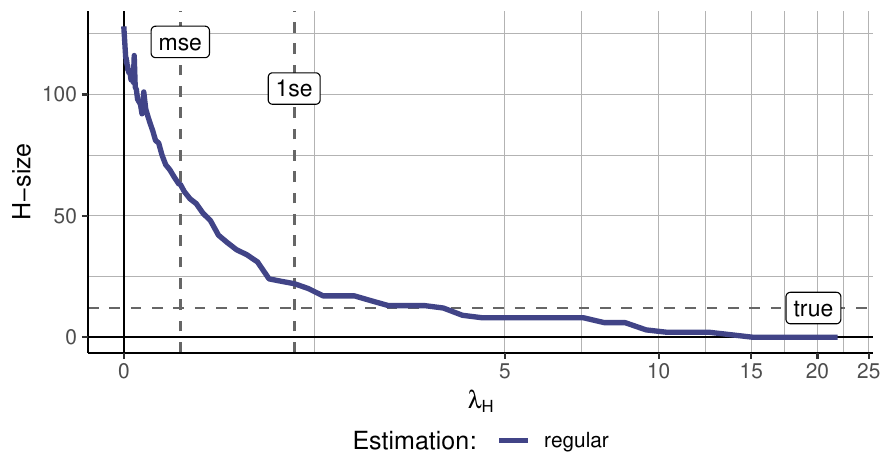}
        \caption{Number of non-zero elements in $\hat{\Hbf}$}
        \label{fig:lambda_H_size}
    \end{subfigure}

    \caption{Element-wise $l_1$ norm (a) and number of non-zero elements (b) of $\hat{\Hbf}$ estimate for different values of regularization parameter $\lambda_H$ \newline{\small Note: 'mse' and '1se' show determined $\lambda_H$ based on \textit{mse} and \textit{1se} optimality criteria, 'true' denotes the true value in the data. The number of non-zero elements is equal for regular and post-regularization estimation}}
    \label{fig:lambda}
\end{figure}

\subsection{ASR-$\Hbf$: Accuracy of treatment effect estimates by signal-to-noise ratio}\label{app:sim_signal_strength}

\begin{figure}[h]
    \centering
    \includegraphics[width=0.7\linewidth]{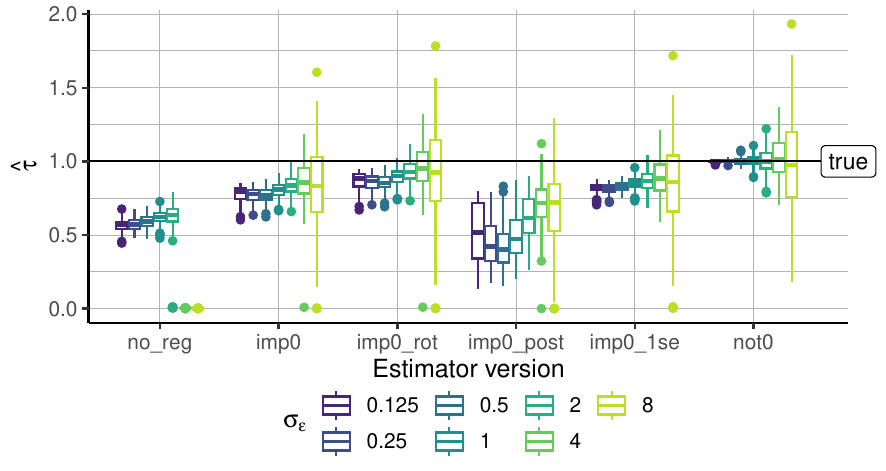}
    \caption{Boxplot of $\hat{\tau}$ by estimation method.\newline{\small Note: Signal-to-noise ratio is given by $1/\sigma_\epsilon$. $N=100$, $T=80$. 1400 runs per signal-to-noise ratio.}}
    \label{fig:sim_boxplot_signal_strength_tau} 
\end{figure}

Figure~\ref{fig:sim_boxplot_signal_strength_tau} shows the estimated treatment effects on the treated for different signal-to-noise ratios. With the true parameters for all explanatory covariates and the fixed treatment effect $\tau$ set to 1, the signal-to-noise ratio is given by the inverse of the error-term standard deviation $\sigma_\epsilon$. All versions of the matrix completion estimator exhibit increasing variance with decreasing signal strength. This is as expected as with larger random shocks, it is more difficult for the estimator to filter out the random components and identify the true signals. The bias pattern of the estimators with the imposed null hypothesis is equivalent to the insights from Section~\ref{sec:Accuracy of treatment effect estimates}. The regularization without imposing the null hypothesis consistently outperforms all other estimators. Among the version with imposed null, the rule-of-thumb correct remains the best-performing estimator, in particular for small signal strengths.\par
\begin{figure}[h]
    \centering
    \includegraphics[width=0.7\textwidth]{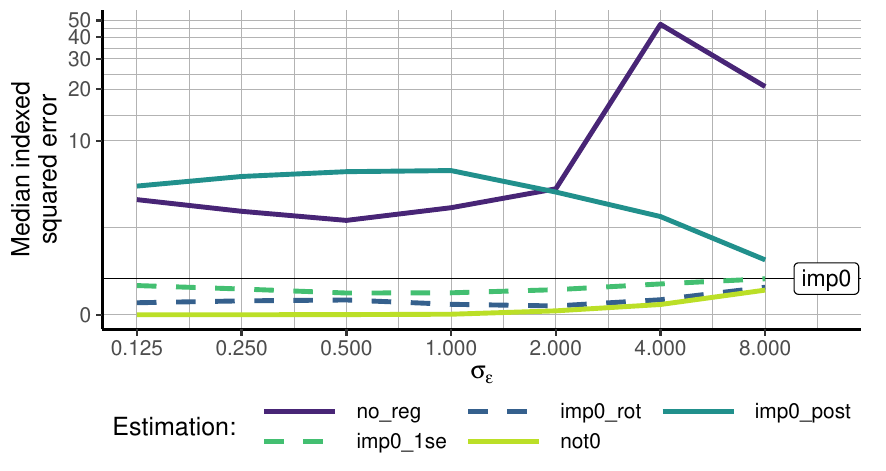}
        \caption{Median indexed squared error of $\hat{\tau}$ by estimation method.\newline{\small Note: \textit{Indexed} measures are divided by the estimate of the \textit{imp0} model. Signal-to-noise ratio is given by $1/\sigma_\epsilon$. $N=100$, $T=80$. 1400 runs per signal-to-noise ratio. Transformed y-axis.}}
        \label{fig:sim_signal_strength_mise}
\end{figure}
In Figure~\ref{fig:sim_signal_strength_mise}, we illustrate the median indexed squared error by the non-regularized estimates, equivalent as described in Section~\ref{sec:Accuracy of treatment effect estimates}. For weak signals, the unregularized estimator heavily struggles to extract the true parameters for its high-dimensional model. All regularized estimator versions perform clearly better as they restrict the focus to a few covariate parameters only. It is worth noting that the differences in accuracy for the estimators with regularization vanish with a decreasing signal-to-noise ratio. 

\subsection{ASR-$\Hbf$: MSE of coefficient estimates in $\Hbf$}\label{App:H_mse_sample_size}

\begin{figure}[h]
    \centering
    \includegraphics[width=0.7\textwidth]{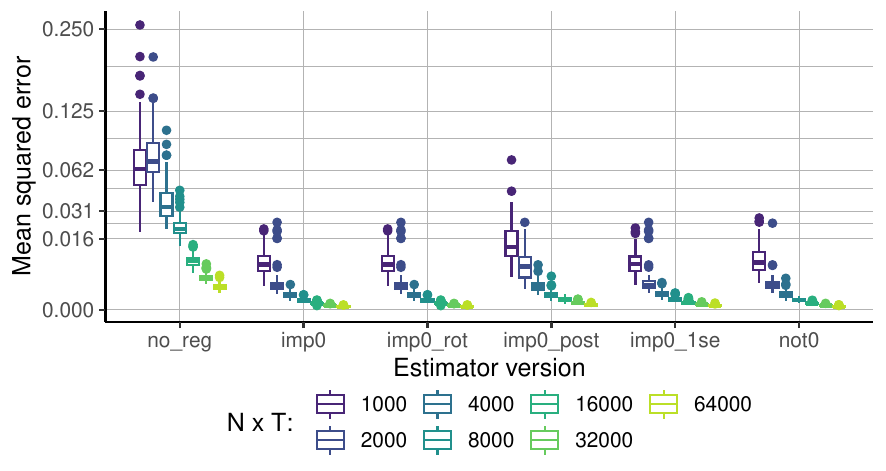}
    \caption{Mean squared error of $\Hbf$-coefficients.\newline{\small Note: $N=100$, $T \in \{10, 20, 40, 80, 160, 320, 640\}$. 700 simulations per sample size. Transformed y-axis.}}
    \label{fig:sim_boxplot_sample_size_H_mse}
\end{figure}
In Figure \ref{fig:sim_boxplot_sample_size_H_mse}, a boxplot representation illustrates the mean squared errors associated with the estimated coefficients across multiple simulation runs. Notably, both the magnitude and variance of coefficient estimation errors exhibit a pronounced decline as the sample size increases. It is observed that post-regularization estimates exhibit the highest MSE among all versions of the estimator considered.
Figure \ref{fig:sim_boxplot_sample_size_H_mse} shows the boxplot of the mean squared errors of the estimated coefficients for each simulation run. The magnitude and the variance of the coefficient estimation errors are rapidly decreasing in the sample size. The post-regularization estimates exhibit the largest mse among all estimator versions. In light of the fact that the second stage produces unbiased estimates through the exclusion of $l_1$-regularization, it follows that the post-regularization coefficient estimates are susceptible to increased variability.\par

\subsection{ASR-$\Hbf$: Model selection property for different signal-to-noise ratios}\label{app:model_selection_signal_strength}
\begin{figure}[h]
    \centering
    \includegraphics[width=0.7\textwidth]{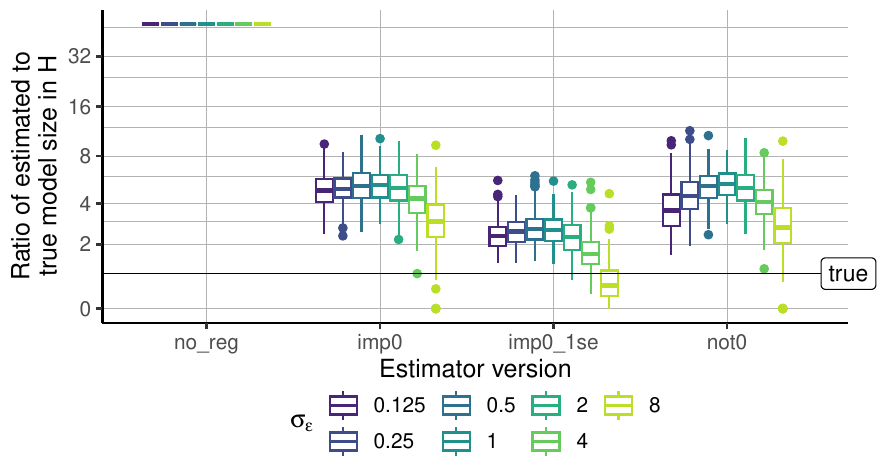}
    \caption{Ratio between the estimated and true model size in $\Hbf$.\newline{\small Note: Signal-to-noise ratio is given by $1/\sigma_\epsilon$. 1400 simulations per signal strength. $N=100$, $T=80$. The number of non-zero elements is equal for regular, rule-of-thumb correction, and post-regularization estimation. Transformed y-axis.}}
    \label{fig:sim_boxplot_signal_strength_H_size_ratio} 
\end{figure}
Figure \ref{fig:sim_boxplot_signal_strength_H_size_ratio} illustrates that the determined model size remains stable when the signal possesses considerable strength. However, for signal-to-noise ratios approaching very small magnitudes, all estimators tend to discern fewer informative parameters within the model. Consequently, this discernment results in estimated models that are even sparser than the true model.

\section{Additional simulation results for $\hat{\bm{\beta}}$ (ASR-$\hat{\bm{\beta}}$)}
\label{app:Simulation Results beta}

\subsection{ASR-$\hat{\bm{\beta}}$: Choice of optimal penalty parameters by cross-validation}

\begin{figure}[h]
    \centering
    \begin{subfigure}[b]{0.9\textwidth} \centering
        \includegraphics[width=0.7\textwidth]{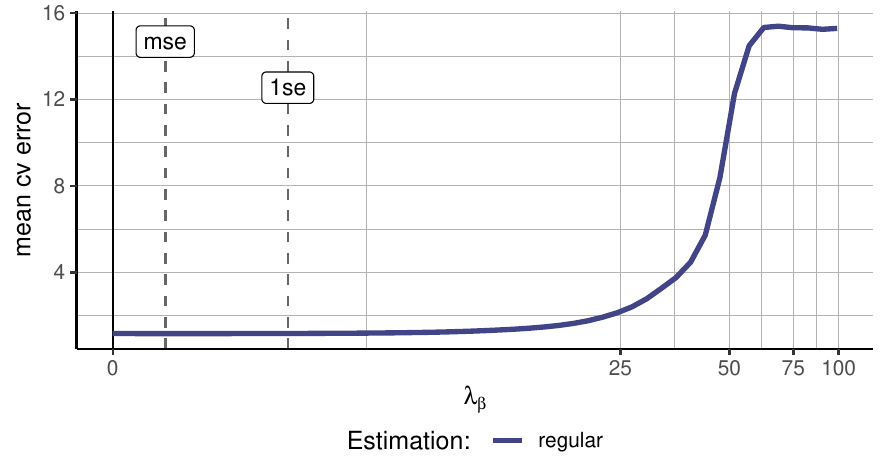}
        \caption{Mean error of out-of-sample prediction over cross-validation folds of $\bm{\beta}$ for different values of regularization parameter $\lambda_\beta$}
        \label{fig:lambda_b_cv} 
    \end{subfigure}\vspace{12pt}

    \begin{subfigure}[b]{0.9\textwidth} \centering
        \includegraphics[width=0.7\textwidth]{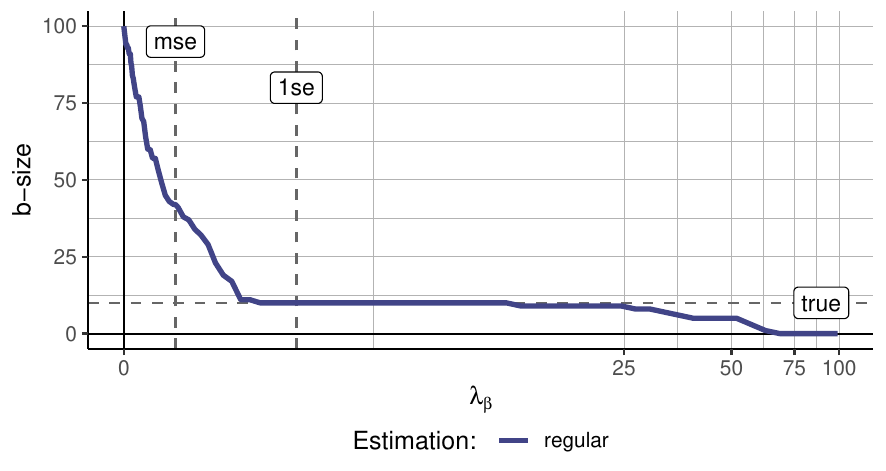}
        \caption{Number of non-zero elements in $ \bm{\beta}$ for different values of regularization parameter $\lambda_\beta$. The number of non-zero elements is equal for regular and post-regularization estimation}
        \label{fig:lambda_b_size}
    \end{subfigure}

    \caption{Choice of $\bm{\beta}$ by cross-validation \newline{\small Note: 'mse' and '1se' show determined $\lambda_\beta$ based on \textit{mse} and \textit{1se} optimality criteria. 'true' denotes the true value in the data.}}
    \label{fig:lambda_b}
\end{figure}

Figure \ref{fig:lambda_b} presents the outcomes of the cross-validation-based determination of the penalization parameter $\bm{\beta}$. The inherent convexity of the optimization problem is evident in Figure \ref{fig:lambda_b_size}, facilitating efficient numerical evaluation. The findings indicate that the utilization of cross-validation samples tends toward conservatism \citep{hastie2009elements}. Notably, the \textit{1se} criterion opts for a notably diminished model size while maintaining proximity to the optimal value of the objective function. The empirical findings illustrated in Figure \ref{fig:lambda_b_size} underscore that the model size determined by the \textit{1se} criterion aligns closely with the true number of non-zero parameters. In contrast, the penalization using the \textit{mse} optimality criterion appears to not shrink the model strongly enough.

\subsection{ASR-$\hat{\bm{\beta}}$: Model selection property}\label{app:beta_model_selection_property}

\begin{figure}[h]
    \centering
    \begin{subfigure}[b]{0.9\textwidth} \centering
        \includegraphics[width=0.7\textwidth]{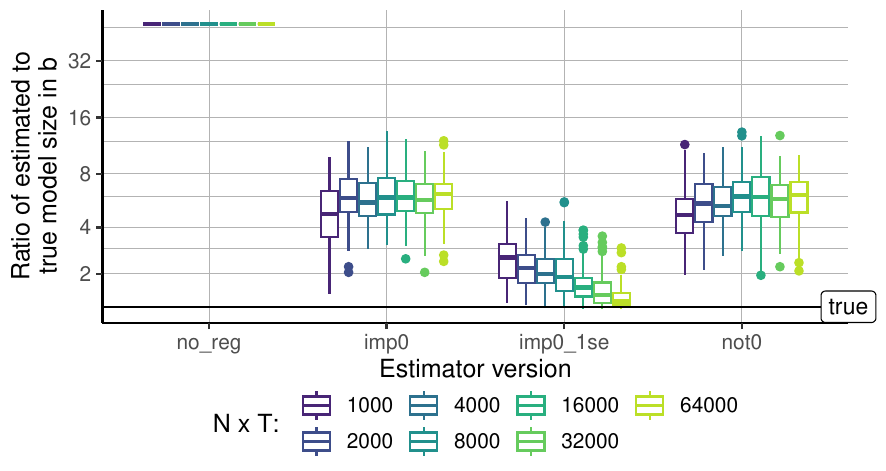}
    \caption{Ratio between the estimated and true model size in $\Hbf$.\newline{\small Note: $N=100$, $T \in \{10, 20, 40, 80, 160, 320, 640\}$. 700 simulations per sample size. The number of non-zero elements is equal for regular, rule-of-thumb correction, and post-regularization estimation. Transformed y-axis.}}
    \label{fig:sim_boxplot_sample_size_b_size_ratio} 
    \end{subfigure}\vspace{12pt}

    \begin{subfigure}[b]{0.9\textwidth} \centering
        \includegraphics[width=0.7\textwidth]{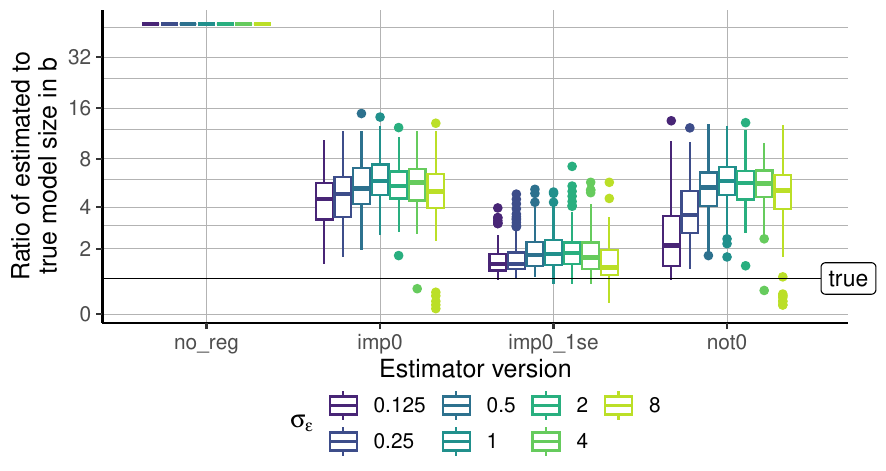}
    \caption{Ratio between the estimated and true model size in $\Hbf$.\newline{\small Note: Signal-to-noise ratio is given by $1/\sigma_\epsilon$. 1400 simulations per signal strength. $N=100$, $T=80$. The number of non-zero elements is equal for regular, rule-of-thumb correction, and post-regularization estimation. Transformed y-axis.}}
    \label{fig:sim_boxplot_signal_strength_b_size_ratio} 
    \end{subfigure}

    \caption{Ratio between the estimated and true model size in $\bm{\beta}$ for different (a) sample size and (b) signal strength.}
    \label{fig:sim_boxplot_b}
\end{figure}

We compare the dimensions of the obtained model, given by the count of non-zero coefficients in the $\bm{\beta}$, to the true size of the generated sample in order to evaluate the ability of the proposed estimator in model selection. Note that neither the post-regularization nor the rule-of-thumb correction estimates alter the model size and, accordingly, are omitted in the following results.\par

Figure~\ref{fig:sim_boxplot_sample_size_b_size_ratio} illustrates a pronounced reduction in the count of non-zero coefficients within the vector $\bm{\beta}$ when employing a regularized estimator. he baseline \textit{mse} optimal cross-validation yields a model that exceeds the true size by approximately sixfold. Contrastingly, employing the \textit{1se} criterion in cross-validation effectively contracts the non-zero coefficients in $\bm{\beta}$, aligning them closely with the correct model size. Notably, with an expanding sample size, the \textit{1se} estimator adeptly exploits the available data, leading to a convergence of the determined model size towards its true value.\par

Figure \ref{fig:sim_boxplot_signal_strength_b_size_ratio} delineates that the ascertained model size exhibits resilience as soon as the signal reaches minimal strengths. Nevertheless, as the signal-to-noise ratio is at a very small magnitude, all estimators tend towards discerning fewer parameters as informative. This propensity leads to the determination of a sparser model that aligns more closely with the true model size. However, it is important to underscore that this observed trend is unlikely to stem from enhanced model selection capabilities. Rather, it is probable that in scenarios characterized by weak signal strengths, the task of pinpointing a relevant parameter becomes inherently more challenging.

\end{document}